\documentclass[a4paper,USenglish,cleveref, autoref, thm-restate]{lipics-v2019}


\usepackage{color}
\usepackage{xcolor}
\definecolor{dkgreen}{rgb}{0,0.6,0}
\definecolor{gray}{rgb}{0.5,0.5,0.5}
\definecolor{lgray}{rgb}{0.85,0.85,0.85}
\definecolor{mauve}{rgb}{0.58,0,0.82}
\definecolor{red}{rgb}{1.0,0,0}
\def\colorno{black}
\def\colormaybe{white}
\def\coloryes{gray}
\def\colorconnect{blue}

\usepackage{amsmath,amsfonts,amssymb}

\DeclareMathOperator*{\bigkringel}{\scalerel*{\circ}{\sum}}

\usepackage{pgfplots}
\usepackage{scalerel}
\usepackage{subcaption}
\usepackage{lcg}
\usepackage{xfp}
\usepackage{hyperref}
\usepackage{cleveref}
\usepackage[ruled,vlined,linesnumbered]{algorithm2e}
\usepackage{thmtools,thm-restate}

\usetikzlibrary{fit}

\bibliographystyle{plainurl}

\title{The Complexity of Online Graph Games}

\titlerunning{The Complexity of Online Graph Games}

\author{Janosch Fuchs}{Department of Computer Science, RWTH Aachen University, Germany}{fuchs@algo.rwth-aachen.de}{https://orcid.org/0000-0003-3993-222X}{}

\author{Christoph Grüne}{Department of Computer Science, RWTH Aachen University, Germany}{gruene@algo.rwth-aachen.de}{https://orcid.org/0000-0002-7789-8870}{}

\author{Tom Janßen}{Department of Computer Science, RWTH Aachen University, Germany}{janssen@algo.rwth-aachen.de}{https://orcid.org/0000-0003-4617-3540}{}

\authorrunning{J. Fuchs, C. Grüne, T. Janßen}

\Copyright{Janosch Fuchs, Christoph Grüne and Tom Janßen}

\ccsdesc[500]{Theory of computation~Problems, reductions and completeness}

\keywords{
Online Algorithms, Computational Complexity, Online Algorithms Complexity, Two-Player Games, NP-complete Graph Problems, PSPACE-completeness, Gadget Reduction}

\category{}

\relatedversion{}

\supplement{}

\funding{This work is funded by the Deutsche Forschungsgemeinschaft (DFG, German Research Foundation) –- GRK 2236/1, WO 1451/2-1.}

\acknowledgements{}

\nolinenumbers 

\hideLIPIcs  

\EventEditors{John Q. Open and Joan R. Access}
\EventNoEds{2}
\EventLongTitle{42nd Conference on Very Important Topics (CVIT 2016)}
\EventShortTitle{CVIT 2016}
\EventAcronym{CVIT}
\EventYear{2016}
\EventDate{December 24--27, 2016}
\EventLocation{Little Whinging, United Kingdom}
\EventLogo{}
\SeriesVolume{42}
\ArticleNo{23}

\begin{document}

\newenvironment{longversion}{}{}
\newenvironment{shortversion}{}{}

\newcommand{\papershort}{
	\excludecomment{longversion}
	\includecomment{shortversion}
}
\newcommand{\paperlong}{
	\excludecomment{shortversion}
}

\newcommand{\NP}{\textsl{NP}}
\newcommand{\PSPACE}{\textsl{PSPACE}}
\newcommand{\PTIME}{\textsl{PTIME}}

\newcommand{\indexSet}[2]{\ensuremath{~#1 \in \{1, \dots, #2\}}}

\newcommand{\CombinatorialProblem}{\ensuremath{P^{C}}}
\newcommand{\CombinatorialGraphProblem}{\ensuremath{P^{CG}}}
\newcommand{\OnlineCombinatorialGraphProblem}{\ensuremath{P^{CG}_o}}
\newcommand{\VertexCoveringGraphProblem}{\ensuremath{P^{VS}}}
\newcommand{\OnlineVertexCoveringGraphProblem}{\ensuremath{P^{VS}_o}}
\newcommand{\VertexPartitioningGraphProblem}{\ensuremath{P^{VPG}}}
\newcommand{\OnlineVertexPartitioningGraphProblem}{\ensuremath{P^{VPG}_o}}
\newcommand{\VertexOrderingGraphProblem}{\ensuremath{P^{VOG}}}
\newcommand{\OnlineVertexOrderingGraphProblem}{\ensuremath{P^{VOG}_o}}

\newcommand{\LiteralGadget}[1]{\ensuremath{G_{\ell_{#1}}}}
\newcommand{\NegLiteralGadget}[1]{\ensuremath{G_{\overline{\ell}_{#1}}}}
\newcommand{\VariableGadget}{\ensuremath{G_{var}}}
\newcommand{\ClauseGadget}{\ensuremath{G_{c}}}
\newcommand{\ExtensionGadget}{\ensuremath{G_{ext}}}
\newcommand{\IdGadget}{\ensuremath{G_{id}}}
\newcommand{\FakeClauseGadget}{\ensuremath{G_{\hspace{-0.5mm}f\hspace{-0.4mm}c}}}
\newcommand{\DependencyRevealGadget}{\ensuremath{G_{dr}}}

\newcommand{\openNeighborhood}[2]{\ensuremath{N_{#1}(#2)}}
\newcommand{\closedNeighborhood}[2]{\ensuremath{N_{#1}[#2]}}
\newcommand{\NeighborhoodSubgraph}[2]{\ensuremath{G^N_{#1}[#2]}}

\newcommand{\todo}[1]{\textcolor{red}{\textbf{TODO: #1}}\\}
\newcommand{\todonn}[1]{\textcolor{red}{\textbf{TODO: #1}}}
\newcommand{\todolater}[1]{\textcolor{orange}{\textbf{TODO (later): #1}}\\}
\newcommand{\todolayout}[1]{\textcolor{violet}{\textbf{TODO (layout): #1}}\\}
\newcommand{\cit}{\textsuperscript{[citation needed]}}

\newcommand{\sat}{\textsc{3-Satisfiability}}
\newcommand{\qsat}{\textsc{TQBF}}
\newcommand{\qsatgame}{\textsc{TQBF Game}}

\newcommand{\ds}{\textsc{Dominating Set}}
\newcommand{\ods}{\textsc{Online Dominating Set}}
\newcommand{\odn}{\textsc{Online Dominating Set Number}}
\newcommand{\odg}{\textsc{Online Dominating Set Game}}

\newcommand{\vc}{\textsc{Vertex Cover}}
\newcommand{\ovc}{\textsc{Online Vertex Cover}}
\newcommand{\ovn}{\textsc{Online Vertex Cover Number}}
\newcommand{\ovg}{\textsc{Online Vertex Cover Game}}

\newcommand{\is}{\textsc{Independent Set}}
\newcommand{\ois}{\textsc{Online Independent Set}}
\newcommand{\oin}{\textsc{Online Independent Set Number}}
\newcommand{\oig}{\textsc{Online Independent Set Game}}

\newcommand{\cn}{\textsc{Chromatic Number}}

\newcommand{\hp}{\textsc{Hamiltonian Path}}

\paperlong

\maketitle

\begin{abstract}
Online computation is a concept to model uncertainty where not all information on a problem instance is known in advance. 
An online algorithm receives requests which reveal the instance piecewise and has to respond with irrevocable decisions. 
Often, an adversary is assumed that constructs the instance knowing the deterministic behavior of the algorithm.
Thus, the adversary is able to tailor the input to any online algorithm.
From a game theoretical point of view, the adversary and the online algorithm are players in an asymmetric two-player game. 

To overcome this asymmetry, the online algorithm is equipped with an isomorphic copy of the graph, which is referred to as unlabeled map.
By applying the game theoretical perspective on online graph problems, where the solution is a subset of the vertices, we analyze the complexity of these online vertex subset games.
For this, we introduce a framework for reducing online vertex subset games from \qsat.
This framework is based on gadget reductions from \sat{} to the corresponding offline problem.
We further identify a set of rules for extending the \sat{}-reduction and provide schemes for additional gadgets which assure that these rules are fulfilled.
By extending the gadget reduction of the vertex subset problem with these additional gadgets, we obtain a reduction for the corresponding online vertex subset game.

At last, we provide example reductions for online vertex subset games based on \vc{}, \is{}, and \ds{}, proving that they are \PSPACE-complete. 
Thus, this paper establishes that the online version with a map of \NP-complete vertex subset problems form a large class of \PSPACE-complete problems. 
\end{abstract}

\newpage

\definecolor{darkgreen}{rgb}{0.0, 0.5, 0.21}
\section{Introduction}\label{intro:sec:intro}

Online computation is an intuitive concept to model real time computation where the full instance is not known beforehand.
In this setting, the instance is revealed piecewise to the online algorithm and each time a piece of information is revealed, an irrevocable decision by the online algorithm is required. 
To analyze the worst-case performance of an algorithm solving an online problem, a malicious adversary is assumed.

The adversary constructs the instance while the online algorithm has to react and compute a solution. 
This setting is highly asymmetric in favor of the adversary. 
Thus, for most decision problems, the adversary is able to abuse the imbalance of power to prevent the online algorithm from finding a solution that is close to the optimal one.  
To overcome the imbalance, there are different extensions of the online setting in which the online algorithm is equipped with some form of a priori knowledge about the instance. 
In this work, we analyze the influence of knowing an isomorphic copy of the input instance, which is also called an \emph{unlabeled map}. 
With the unlabeled map, the algorithm is able to recognize unique structures while the online instance is revealed -- like a vertex with unique degree -- but it cannot distinguish isomorphic vertices or subgraphs.  

The relation between the online algorithm and the adversary corresponds to players in an asymmetric two-player game, in which the algorithm wants to maximize its performance and the adversary's goal is to minimize it. 
The unlabeled map can be considered as the game board. 
One turn of the game consists of a move by the adversary followed by a move of the online algorithm. 
Thereby, the adversary reveals a vertex together with its neighbors and the online algorithm has to irrevocably decide whether to include this vertex in the solution or not. 
The problem is to evaluate whether the online algorithm has a \emph{winning strategy}, that is, it is able to compute a solution of size smaller/greater or equal to the desired solution size $k$, 
for all possible adversary strategies. 


Papadimitriou and Yannakakis make use of the connection between games and online algorithms for analyzing the canadian traveler problem in~\cite{DBLP:journals/tcs/PapadimitriouY91}, which is an online problem where the task is to compute a shortest $s$-$t$-path in an a priori known graph in which certain edges can be removed by the adversary. 
They showed that the computational problem of devising a strategy that achieves a certain competitive ratio is \PSPACE-complete by giving a reduction from \textsc{True Quantified Boolean Formula}, short \qsat.

Independently, Halld{\'{o}}rsson~\cite{DBLP:journals/combinatorics/Halldorsson00} introduced the problems online coloring and online independent set on a priori known graphs, which is equivalent to having an unlabeled map. 
He studies how the competitive ratios improve compared to the model when the graph is a priori not known. 
Based on these results, Halld{\'{o}}rsson et al.~\cite{DBLP:journals/tcs/HalldorssonIMT02} continued the work on the online independent set problem without a priori knowing the graph. 
These results are then applied by Boyar et al.~\cite{DBLP:journals/mst/BoyarFKM17} to derive a lower bound for the advice complexity of the online independent set problem. Furthermore, they introduce the class of asymmetric online covering problems (AOC) containing \ovc{}, \ois{}, \ods{} and others. 
Boyar et al.~\cite{DBLP:journals/dam/BoyarK18} 
analyze the complexity of these problems as graph property, namely the online vertex cover number, online independence number and online domination number, by showing their \NP-hardness.

Moreover based on the work by Halld{\'{o}}rsson~\cite{DBLP:journals/combinatorics/Halldorsson00}, Kudahl~\cite{DBLP:conf/ciac/Kudahl15} shows \PSPACE-completeness of the decision problem \textsc{Online Chromatic Number with Precoloring} on an a priori known graph, which asks whether some online algorithm is able to color $G$ with at most $k$ colors for every possible order in which $G$ is presented while having a precolored part in $G$.
This approach is then improved by B{\"{o}}hm and Vesel{\'{y}}~\cite{DBLP:journals/mst/BohmV18} by showing that \textsc{Online Chromatic Number} is \PSPACE-complete by giving a reduction from \qsat. 
 




Our contribution is to analyze the computational complexity of a subclass of AOC problems that consider graph problems where the solution is a subset of the vertices. 
Similar to the problem \textsc{Online Chromatic Number}, we equip the online algorithm with an unlabeled map in order to apply and formalize the ideas of B{\"{o}}hm and Vesel{\'{y}}.
We call these problems \emph{online vertex subset games} due to their relation to two-player games. 
While symmetrical combinatorial two-player games are typically \PSPACE-complete~\cite{FRAENKEL198721}, this principle does not apply to our asymmetrical setting. 
We are still able to prove \PSPACE-completeness for the online vertex subset games based on \vc, \is{} and \ds{} by designing reductions such that the adversary's optimal strategy corresponds to the optimal strategy of the $\forall$-player in TQBF.

In order to derive reductions from TQBF to online vertex subset games, we identify properties describing the revelation or concealment of information to correctly simulate the $\forall$- and $\exists$-decisions as well as the evaluation of the quantified Boolean formula in the online vertex subset game.
This simulation is modeled by disjoint and modular gadgets, which form a so-called gadget reduction -- similar to already known reductions between NP-complete problems.
Different forms of gadget reductions are described by Agrawal et al.~\cite{DBLP:conf/stoc/AgrawalAIPR97} who formalize $AC^0$-gadget-reductions in the context of \NP-completeness and by Trevisan et al.~\cite{DBLP:conf/focs/TrevisanSSW96} who describe gadgets in reductions of problems that are formalized as linear programs.
By formalizing gadgets capturing the above mentioned properties, we provide a framework to derive reductions for other online vertex subset games, which are based on problems that are gadget-reducible from \sat{}.

\paragraph*{Paper Outline}\label{intro:subsec:paperOutline}
First, we explain the online setting that we use throughout the paper and important terms, e.g., the online game, the problem class and the reveal model of our online problems.
Secondly, we define the gadget reduction framework to reduce \sat{} to vertex subset problems. 
In the third section, we extend the framework to the online setting by identifying a set of important properties that must be fulfilled in the reduction. 
We also provide a scheme for gadgets that enforce these properties to generalize the framework to arbitrary vertex subset problems. 
In the fourth section, we detail the application of this framework to the problem \vc.
Lastly, we apply the framework to the problems \is{} and \ds{} in the fifth section.
At the end, we summarize the results and give a prospect on future possible work.



\paragraph*{Neighborhood Reveal Model}
Each request of the online problem reveals information about the instance for the online algorithm. 
The amount of information in each step is based on the reveal model. 
For an online problem with a map, the subgraph that arrives in one request is called \textit{revelation subgraph}. 

The neighborhood reveal model, which we use in this paper, was introduced by Harutyunyan et. al.~\cite{DBLP:conf/mfcs/HarutyunyanPR21}.
Within that model, the online algorithm gains information about the complete neighborhood of the revealed vertex. 
Nevertheless, the online algorithm has to make a decision on the current revealed vertex only but not on the exposed neighborhood vertices.
All exposed but not yet revealed neighborhood vertices have to be revealed in the process of the online problem such that a decision can be made upon them.
We denote the closed neighborhood of $v$ with $N[v]$, that is, the set of $v$ and all vertices adjacent to $v$. 

\begin{definition}[Neighborhood Reveal Model]\label{intro:def:neighborhoodRevealModel}
The neighborhood reveal mo\-del is defined by an ordering of graphs $(V_i, A_i, E_i)_{i \leq |V|}$. The reveal order of the adversary is defined by $adv \in S_{|V|}$, where $S_{|V|}$ is the symmetric group of size $|V|$. 
The graph $G_i$ is defined by 
\begin{align*}
	V_0 &= E_0 = \emptyset, &  \\
	V_i &= V_{i-1} \cup N[{v_{adv(i)}}], \ &\text{for} \ 0 < i \leq |V| & \\
	E_i &= E_{i-1} \cup \{(v_{adv(i)}, w) \in E \ \mid w \in V_i \}, \ &\text{for} \ 0 < i \leq |V|&.
\end{align*}
The \textit{revelation subgraph} $G^\prime$ in the neighborhood reveal model is the subgraph of $G_i$ defined by $G' = (V', E')$ with $V' = N[v_{adv(i)}]$ and $E' = \{ \{v_{adv(i)}, w\} \in E\}$. 
The online algorithm has to decide whether $v_{adv(i)}$ is in the solution or not.
\end{definition}

\paragraph*{Online Vertex Subset Games}
Throughout the paper, we consider a special class of combinatorial graph problems.
The question is to find a vertex subset, whereby the size should be either smaller or equals, for minimization problems, or greater or equals, for maximization problems, some $k$, which is part of the input.
Thereby, the vertex set needs to fulfill some constraints based on the specific problem.
We call these problems vertex subset problems.
Well-known problems like \vc, \is{} and \ds{} are among them.

We denote the online version with a map of a vertex subset problem \VertexCoveringGraphProblem{} with \OnlineVertexCoveringGraphProblem{} and define them as follows.

\begin{definition}[Online Vertex Subset Game]\label{intro:ex:vertexCoveringGraphProblem}
	An \textit{online vertex subset game} $\OnlineVertexCoveringGraphProblem{}$ has a graph $G$ and a $k \in \mathbb{N}$ as input.
	The question is, whether the online algorithm is able to find a vertex set of size smaller (resp. greater) or equals $k$, which fulfills the constraints of $\VertexCoveringGraphProblem$ for all strategies of the adversary.
	Thereby, the online algorithm has access to an isomorphic copy of $G$ and the adversary reveals the vertices according to the neighborhood reveal model.
\end{definition}

\section{Gadget Reductions}

\textit{Gadget reductions} are a concept to reduce combinatorial problems in a modular and structured way.
For the context of the paper, we define gadget reductions from \sat{} to vertex subset problems. The
\sat{} instances $\varphi = (L, C)$ are defined by their literals $L$ and their clauses $C$.
We use a literal vertex $v_\ell$ for all $\ell \in L$ to represent a literal in the graph.
There are implicit relations over the literals besides the explicit relation $C$, in that the reduction may be decomposed.
For example, the relation between a literal and its negation, which is usually implicitly used to build up a \textit{variable gadget}. 
These variable gadgets are connected by graph substructures that assemble the clauses as \textit{clause gadgets}.

\begin{definition}[Gadget Reduction from \sat{} to Vertex Subset Problems]\label{intro:def:gadgetReduction}
	A gadget reduction $R_{gadget}(\VertexCoveringGraphProblem)$	
	from a \sat{} formula $\varphi = (L, C)$ to a vertex subset problem with graph $G_\varphi = (V, E)$ is a tuple containing functions from the literal set and all relations of the \sat{} formula to the vertex set and all relations of the vertex subset problem. 
	In the following, we denote the gadget based on element $x$ to be $G_x := (V_x, E_x)$ with $V_x$ being a set of vertices and $E_x$ a set of edges, whereby the edges are potentially incident to vertices of a different gadget.
	
	The literal set of a \sat{} formula $\ell_1, \ell_2, \ldots, \ell_{|L|-1}, \ell_{|L|}$ is mapped to the vertex set of the graph problem.
	Thereby, each literal is mapped to exactly one vertex: $$R^{L \rightarrow V}_{gadget}(\VertexCoveringGraphProblem): L \rightarrow V, \ell \mapsto G_\ell $$
	
	The following relations on the literals are mapped as well.\\
	\begin{tabular}{lll}
		(1) & Literal - Negated Literal: &$
		R^{L,\overline{L}}_{gadget}(\VertexCoveringGraphProblem):
		R(L,\overline{L}) \rightarrow (V, E), (\ell, \overline{\ell}) \mapsto G_{\ell, \overline{\ell}}$ \\
		(2) & Clause: &$
		R^{C}_{gadget}(\VertexCoveringGraphProblem):
		R(C) \rightarrow (V, E), C_j \mapsto G_{C_j}$\\
		(3) & Literal - Clause: &$
		R^{L, C}_{gadget}(\VertexCoveringGraphProblem):
		R(L, C) \rightarrow (V, E), (\ell, C_j) \mapsto G_{\ell, C_j}$\\
		(4) & Negated Literal - Clause: &$
		R^{\overline{L}, C}_{gadget}(\VertexCoveringGraphProblem):
		R(\overline{L}, C) \rightarrow (V, E), (\overline{\ell}, C_j) \mapsto G_{\overline{\ell}, C_j}$
	\end{tabular}
	\newline
	Additionally, the following mapping allows for constant parts that do not change depending on the instance: $R^{const}_{gadget}(\VertexCoveringGraphProblem): \emptyset \rightarrow (V,E), \emptyset \mapsto G_{const}$.
	Thereby, the vertices and edges of all gadgets are pairwise disjoint.
\end{definition}

We use the more coarse grained view of \textit{variable gadgets} as well.
These combine the mappings $R^{L \rightarrow V}_{gadget}$ and $R^{L,\overline{L}}_{gadget}$ to $R^{X}_{gadget}$, and $R^{L, C}_{gadget}$ and $R^{\overline{L}, C}_{gadget}$ to $R^{X, C}_{gadget}$, where $X$ is the set of $n$ variables.

The important function of the variable gadget is to ensure that only one of the literals of $\ell, \overline{\ell} \in L$ is chosen.
On the other hand, the function of the clause gadget is to ensure together with the constraints of the vertex subset problem \VertexCoveringGraphProblem{} that the solution encoded on the literals fulfill the \sat-formula if and only if the literals induce a correct solution.
These functionalities are utilized in the correctness proof of the reduction by identifying the logical dependencies between the literal vertices $v_\ell$ for $\ell \in L$ and all other vertices based on the graph and the constraints of \VertexCoveringGraphProblem{} together with combinatorial arguments on the solution size.
We denote these logical dependencies as \textit{solution dependencies} as they are logical dependencies on the solutions of \VertexCoveringGraphProblem.
%
Due to the asymmetric nature of the online problems, the adversary can reveal a solution dependent vertex before revealing the corresponding literal vertex.
Thus, a decision on the solution dependent vertex is implicitly also a decision on the literal vertex, although it has not been revealed.
We address this specific problem later in the description of the framework.


\begin{definition}[Solution dependent vertices]\label{intro:def:solutionDependentVertices}
	Given a gadget reduction, the following vertices of the reduction graph are solution dependent:
	\begin{enumerate}
		\item All literal vertices are solution dependent on their respective variable.
		\item For a literal $\ell$ (resp. its negation $\overline{\ell}$), we denote the set of vertices that need to be part of the solution if $v_\ell$ (resp. $v_{\overline{\ell}}$) is part of the solution with $V_\ell$ (resp. $V_{\overline{\ell}}$).
		Then the vertices, that are in one but not both of these sets, i.e. $V_\ell \ \triangle \ V_{\overline{\ell}}$, are solution dependent on the corresponding variable. 
	\end{enumerate}
	All vertices that are not solution dependent on any variable are called \emph{solution independent}.
\end{definition}

For example, in the reduction from \sat{} to \vc~\cite{garey1979computers}, the following solution dependencies apply:
For each literal, the vertices $v_\ell$ and $v_{\overline{\ell}}$ are solution dependent on their respective variable.
Furthermore, if a literal is part of the solution, all clause vertices representing its negation must also be part of the solution.
Thus all clause vertices are solution dependent on their respective variable.
Consequently, all vertices of the reduction graph for vertex cover are solution dependent.

\section{A Reduction Framework for Online Vertex Subset Games}\label{generalReductionCovering:sec:aGeneralReductionFrameworkForVertexCoveringGraphProblems}

In this section, we present a general framework for reducing \qsatgame{} to an arbitrary online vertex subset game \OnlineVertexCoveringGraphProblem{} with neighborhood reveal model. 
The \qsatgame{} is played on a fully quantified Boolean formula, where one player decides the $\exists$-variables and the other decides the $\forall$-variables, in the order they are quantified.
Deciding whether the $\exists$-player has a winning strategy is \PSPACE-complete \cite{DBLP:conf/stoc/StockmeyerM73}, and thus this reduction proves \PSPACE-hardness for \OnlineVertexCoveringGraphProblem{}.
We assume that the \qsatgame{} consists of clauses with at most three literals, which is also known to be \PSPACE-complete \cite{garey1979computers}.

Before we describe the reduction, we prove that the online game version of each vertex covering graph problem in \NP{} is in \PSPACE.

\begin{restatable}{theorem}{generalReductionCoveringthmPspaceContainment}\label{generalReductionCovering:thm:PspaceContainment}
	If \VertexCoveringGraphProblem{} is in \NP{}, then \OnlineVertexCoveringGraphProblem{} is in \PSPACE.
\end{restatable}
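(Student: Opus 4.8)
The plan is to present $\OnlineVertexCoveringGraphProblem$ as a finite, perfect-information, two-player game and to decide the existence of a winning strategy by a space-bounded search of its game tree. By \cref{intro:def:neighborhoodRevealModel} the game lasts exactly $|V|$ rounds: in round $i$ the adversary reveals a so-far unrevealed vertex $v_{adv(i)}$ together with $\closedNeighborhood{}{v_{adv(i)}}$, and the online algorithm then decides irrevocably whether to place $v_{adv(i)}$ into its solution. I would encode the current situation as a \emph{configuration} consisting of the list of already revealed vertices, the algorithm's in/out decision for each of them, and the set of exposed but not-yet-revealed neighbors; such a configuration clearly has size polynomial in $|G|$. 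The online algorithm then possesses a winning strategy if and only if
\[
\forall\, adv(1)\,\exists\, d_1 \cdots \forall\, adv(|V|)\,\exists\, d_{|V|}\ \big[\, S \text{ is feasible for } \VertexCoveringGraphProblem \text{ and } |S|\le k \,\big],
\]
where $S$ is the finally decided vertex set, each $adv(i)$ ranges over the unrevealed vertices, and each decision $d_i$ is binary (the condition reads $|S|\ge k$ in the maximization case). Thus the adversary moves are universally quantified, the algorithm moves existentially, and there are $O(|V|)$ alternations.

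For the evaluation at a leaf I would use the hypothesis $\VertexCoveringGraphProblem \in \NP$. A leaf fixes a concrete subset $S\subseteq V$, and I need to test whether $S$ is a feasible solution of the required size. Since $\VertexCoveringGraphProblem$ lies in \NP{}, the candidate subset $S$ itself serves as a certificate, so its feasibility -- i.e.\ whether it satisfies the problem's constraints -- can be verified in polynomial time, while the size comparison with $k$ is immediate. Hence each leaf can be evaluated in polynomial time and polynomial space.

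Finally I would evaluate the game tree by a recursive depth-first (\,min--max / and--or\,) procedure: a node in which it is the algorithm's turn is winning iff some child is winning, a node in which it is the adversary's turn is winning iff every child is winning, and leaves are evaluated as above. Because the recursion depth is bounded by the $O(|V|)$ rounds and only the current root-to-node path of polynomially sized configurations needs to be stored, the whole computation runs in polynomial space, yielding $\OnlineVertexCoveringGraphProblem \in \PSPACE$. Equivalently, the $\forall\exists$-alternation of polynomially many polynomial-size moves together with a polynomial-time leaf test is an alternating polynomial-time computation, and $\mathrm{APTIME}=\PSPACE$ gives the claim. I expect the only points that need genuine care to be the precise formalization of the adversary's legal moves, so that the enumerated game tree faithfully respects the neighborhood reveal model, and the invocation of the \NP{} hypothesis to guarantee the polynomial-time leaf test; the space bound itself then follows from the standard fact that evaluating a bounded-depth, polynomially branching game tree requires retaining only the active path.
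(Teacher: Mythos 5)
Your proposal is correct and takes essentially the same approach as the paper: the paper's (much terser) proof likewise decides the game by playing through the linearly many reveal rounds while storing only the current configuration -- the revealed subgraph and the partial solution, each of polynomial size -- which is exactly your path-only depth-first evaluation of the game tree. Your write-up merely makes explicit what the paper leaves implicit, namely the alternating $\forall/\exists$ quantifier structure and the polynomial leaf test obtained from \VertexCoveringGraphProblem{} being in \NP{}.
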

\begin{longversion}
	\begin{proof}
		The instance graph is encoded in linear space.
		The solution (subset of vertices) is encoded in at most linear space because the base problem \VertexCoveringGraphProblem{} is in \NP{}.
		The number of moves is the number of universe elements, which are vertices.
		This is linear in the input size.
		For each move, the currently revealed graph is stored as well as the current solution.
		This is again linear in the input size.
		Thus, the used space is overall polynomial for each move, whereby there is only a linear number of moves.
		Consequently, the problem \OnlineVertexCoveringGraphProblem{} is in \PSPACE{}.
	\end{proof}
\end{longversion}

This framework uses an (existing) gadget reduction of the vertex subset problem \VertexCoveringGraphProblem{} from \sat{} and extends it in order to give the online algorithm the ability to recognize the current revealed vertex.
Due to the quantification of variables, we call the variable gadget of a $\forall$-variable a $\forall$-gadget (resp. $\exists$-gadget for an $\exists$-variable). 
Based on this, the online algorithm can use a one-to-one correspondence between the solution of the \qsatgame{} instance and the \OnlineVertexCoveringGraphProblem{} instance.
The one-to-one correspondence between the $\forall$-variables and the $\forall$-gadgets is ensured by the knowledge of the adversary about the deterministic online algorithm. 
It simulates the response of the algorithm on the $\forall$-gadget. 

\paragraph*{Extension Gadgets}

We extend the reduction graph $G_\varphi$ of the offline problem with gadgets to a reduction graph for the online problem.
These gadgets extend $G_\varphi$ by connecting to a subset of its vertices.
We denote these gadgets \ExtensionGadget{} as \textit{extension gadgets}.

\begin{definition}[Graph Extension]\label{generalReductionCovering:def:graphExtension}
	A graph extension of a graph $G = (V, E)$ by an extension gadget $\ExtensionGadget{} = (V_{ext}, E_{ext}, E_{con})$ with the set of connecting edges $E_{con} \subseteq V \times V_{ext}$ is defined as $H = G \circ \ExtensionGadget$, whereby
	\begin{align*}
		V(H) &= V \cup V_{ext}, \\
		E(H) &= E \cup E_{ext} \cup E_{con}. 
	\end{align*}
	We further define $G \bigkringel_{i \in I} G^i_{ext} := \left(\ldots\left(\left(G \circ G^{i_1}_{ext}\right) \circ G^{i_2}_{ext}\right) \circ \ldots \right)\circ G^{i_{|I|}}_{ext}$.
\end{definition}

We also need the notion of \emph{self-contained} gadgets.
These do not influence the one-to-one correspondence between solutions of the online vertex subset game \OnlineVertexCoveringGraphProblem{} and \qsatgame. 
In other words, optimal solutions on the graph and the extension gadget can be disjointly merged to obtain an optimal solution on the extended graph.
Due to this independence, we are able to provide local information to the online algorithm via the map without changing the underlying formula.
An example for self-contained extension gadgets is provided in \Cref{framework:fig:extensionGadget}.
Note that, it can occur that self-containment depends on the extended graph.

\begin{figure}[!ht]
	\centering
	\scalebox{1}{
	\begin{tikzpicture}[scale=0.5,
						node/.style = {shape=circle, draw, inner sep=0pt, minimum size=0.25cm},
						textnode/.style = {shape=circle, draw, inner sep=0pt, minimum size=0.4cm},
						smallnode/.style = {shape=circle, draw, inner sep=0pt, minimum size=0.1cm},
						box/.style = {rectangle, fill=gray!20, rounded corners, fill opacity=1, inner sep=1pt}]
		\node (H1box1) at (-1, 0.1) {};
		\node (H1box2) at (7.7, 3.2) {};
		\node[box, fill=gray!10, fit=(H1box1)(H1box2)] (H1box) {};
		\node[above right] (H1text) at (H1box.south west) {$H$};
		
		\node (G1box1) at (0, 0.3) {};
		\node (G1box2) at (2, 3) {};
		\node[box, fit=(G1box1)(G1box2)] (G1box) {};
		\node[above right] (G1text) at (G1box.south west) {$G$};
		
		\node[smallnode] (v11) at (1.9, 2.7) {};
		\node[smallnode] (v12) at (1.9, 2.1) {};
		\node[smallnode] (v13) at (1.9, 1.5) {};
		\node[smallnode] (v14) at (1.9, 0.9) {};
		\path[-] (v11) edge (v12) (v12) edge (v13);
		
		\node (G1dots1) at (0.5, 2.5) {$\cdots$};
		\node (G1dots2) at (0.5, 1.2) {$\cdots$};
		\path[-] (G1dots1) edge (v11) edge (v13) (G1dots2) edge (v14) edge (v12);
		
		\node (G1ebox1) at (5.5, 0.3) {};
		\node (G1ebox2) at (7.5, 3) {};
		\node[box, fit=(G1ebox1)(G1ebox2)] (G1ebox) {};
		\node[above right] (G1etext) at (G1ebox.south west) {$G_{ext}$};
		
		\node[smallnode] (ve11) at (5.6, 2.7) {};
		\node[smallnode] (ve12) at (5.6, 2.1) {};
		\node[smallnode] (ve13) at (5.6, 1.5) {};
		\node[smallnode, fill=black] (ve14) at (6.3, 2.1) {};
		\node[smallnode] (ve15) at (7, 2.1) {};
		\path[-] (ve14) edge (ve11) edge (ve12) edge (ve13) edge (ve15);
		
		\node (E1box1) at (3, 0.3) {};
		\node (E1box2) at (4.5, 3) {};
		\node[box, fit=(E1box1)(E1box2)] (E1box) {};
		\node[above right] (E1text) at (E1box.south west) {$E_{con}$};
		
		\path[-, draw=\colorconnect, dashed] (ve13) edge (v14) edge (v12);
		\path[-, draw=\colorconnect, dashed] (ve12) edge (v14) edge (v11);
		\path[-, draw=\colorconnect, dashed] (ve11) edge (v13);
		
		\node (H2box1) at (12, 0.1) {};
		\node (H2box2) at (20.2, 3.2) {};
		\node[box, fill=gray!10, fit=(H2box1)(H2box2)] (H2box) {};
		\node[above right] (H2text) at (H2box.south west) {$H$};
		
		\node (G2box1) at (13, 0.3) {};
		\node (G2box2) at (15, 3) {};
		\node[box, fit=(G2box1)(G2box2)] (G2box) {};
		\node[above right] (G2text) at (G2box.south west) {$G$};
		
		\node[smallnode] (v21) at (14.9, 2.7) {};
		\node[smallnode] (v22) at (14.9, 2.1) {};
		\node[smallnode, fill=black] (v23) at (14.9, 1.5) {};
		\node[smallnode, fill=black] (v24) at (14.9, 0.9) {};
		\path[-] (v21) edge (v22) (v22) edge (v23);
		
		\node (G2dots1) at (13.5, 2.5) {$\cdots$};
		\node (G2dots2) at (13.5, 1.2) {$\cdots$};
		\path[-] (G2dots1) edge (v21) edge (v23) (G2dots2) edge (v24) edge (v22);
		
		\node (G2ebox1) at (18.5, 0.3) {};
		\node (G2ebox2) at (20, 3) {};
		\node[box, fit=(G2ebox1)(G2ebox2)] (G2ebox) {};
		\node[above right] (G2etext) at (G2ebox.south west) {$G_{ext}$};
		
		\node[smallnode] (ve21) at (18.6, 2.7) {};
		\node[smallnode] (ve22) at (18.6, 2.1) {};
		\node[smallnode] (ve23) at (18.6, 1.5) {};
		\path[-] (ve22) edge (ve21);
		\path[-] (ve22) edge (ve23);
		
		\node (E2box1) at (16, 0.3) {};
		\node (E2box2) at (17.5, 3) {};
		\node[box, fit=(E2box1)(E2box2)] (E2box) {};
		\node[above right] (E2text) at (E2box.south west) {$E_{con}$};
		
		\path[-, draw=\colorconnect, dashed] (ve23) edge (v24) edge (v22);
		\path[-, draw=\colorconnect, dashed] (ve22) edge (v24) edge (v21);
		\path[-, draw=\colorconnect, dashed] (ve21) edge (v23);
	\end{tikzpicture}
}
	\caption{On the left, there is an example for an extension gadget that is self-contained w.r.t. the dominating set problem: No matter the solution on $G$, at least one vertex of $G_{ext}$ has to be chosen. Additionally, choosing the black vertex of $G_{ext}$ dominates all vertices attached to $G$, and thus any solution on $G$ remains valid.
	On the right, there is an example for an extension gadget that is not self-contained w.r.t. the dominating set problem: If the solution on $G$ contains the black vertices, it is also a solution for $H$, but the optimal solution on $G_{ext}$ contains one vertex.}
	\label{framework:fig:extensionGadget}
\end{figure}

For our reduction framework, we introduce three types of self-contained extension gadgets: fake clause gadgets, dependency reveal gadgets and ID gadgets.
The goal of these gadgets is that it is optimal for the adversary to reveal variables in the order of quantification, and that the online algorithm is able to assign the value of the $\exists$-variables, while the adversary is able to assign the value of the $\forall$-variables.

\paragraph*{Fake Clause Gadgets}
The number of occurrences of a certain literal in clauses is information that may allow the online algorithm to distinguish the literals of some $\forall$-variables, allowing the online algorithm to decide the assignment instead of the adversary.
To avoid this information leak, we add gadgets for all possible non-existing clauses to the reduction graph.
A \emph{fake clause gadget} is only detectable if and only if a vertex, which is part of that clause gadget, is revealed by the adversary.
The gadget needs to be self-contained such that the one-to-one correspondence between the solutions of the \OnlineVertexCoveringGraphProblem{} and \qsatgame{} is not affected.

\begin{definition}[Fake Clause Gadget]\label{generalReductionCovering:def:fakeClauseGadget}
	A fake clause gadget $\FakeClauseGadget(C'_j)$ for a non-existing clause $C'_j \notin C$ is an extension gadget that is self-contained.
	The fake clause gadgets are connected to the variable gadgets like the clause gadgets are to the variable gadgets according to the original gadget reduction, see \Cref{intro:def:gadgetReduction}.
\end{definition}

All fake clause gadgets are pairwise disjoint.
Let $G_\varphi$ be the gadget reduction graph and
$$
	G_\varphi' := G_\varphi \bigkringel_{C'_j \notin C} \FakeClauseGadget(C'_j).
$$
After adding fake clause gadgets for all clauses $C'_j \notin C$ to $G_\varphi$, the revelation subgraphs of all vertices $v \in V(G_\varphi')$, which are part of a literal gadget, are pairwise isomorphic.

\paragraph*{Dependency Reveal Gadgets}
The two functions of the dependency reveal gadgets are that the adversary chooses the reveal order to be the order of quantification and the online algorithm knows the decision on the $\forall$-variables after the decision is made by the adversary.
If the adversary deviates from the quantification order, the $\forall$-decision degenerates to an $\exists$-decision for the online algorithm.
On the other hand, since the adversary forces the online algorithm to blindly choose the truth value of a $\forall$-quantified variable, the online algorithm does not know the chosen truth value.
Thus, we need to reveal the truth value to the online algorithm whenever a solution dependent vertex of the next variables is revealed.

\begin{definition}[Dependency Reveal Gadget]\label{generalReductionCovering:def:dependencyRevealGadget}
	A dependency reveal gadget $\DependencyRevealGadget(x_i)$ for $\forall$-variable $x_i$ is an extension gadget that is self-contained with the property:
	Let $\ell, \overline{\ell}$ be the literals of $x_i$.
	If a solution dependent vertex of $x_j$ with $j \geq i$ is revealed to the online algorithm, the online algorithm is able to uniquely identify the vertices $v_{\ell}$ and $v_{\overline{\ell}}$.
\end{definition}

\paragraph*{ID Gadgets}
At last, the online algorithm needs information on the currently revealed vertex to identify it with the help of the map.
For this, we introduce ID gadgets, which make the revelation subgraph of vertices distinguishable to a certain extent.
Thus, the online algorithm is able to correctly encode the TQBF solution into the solution of the vertex subset game.
The ID gadget is always connected exactly to the vertex it identifies, thus they are pairwise disjoint.

\begin{definition}[ID Gadget]\label{generalReductionCovering:def:ID-Gadget}
	An identification gadget $\IdGadget(v)$ is a self-contained extension gadget connected to $v$ such that the revelation subgraph of $v$ is isomorphic to revelation subgraphs of vertices within a distinct vertex set $V^\prime \subseteq V$.
\end{definition}

\paragraph*{The General Reduction for Online Vertex Subset Games}

With the gadget schemes defined above, we are able to construct a gadget reduction from \qsatgame{} to \OnlineVertexCoveringGraphProblem{}.
The idea of the reduction is to construct the optimal game strategy for the online algorithm to compute the solution to the \qsatgame{} formula. Furthermore, encoding the solution to the \qsatgame{} into the \OnlineVertexCoveringGraphProblem{} instance is a winning strategy by using the equivalence of the $\exists$- and $\forall$-gadgets to the $\exists$- and $\forall$-variables. At last, there is a one-to-one correspondence between the reduction graph solution of \VertexCoveringGraphProblem{} and the \sat-solution.

A gadget reduction from \sat{} to vertex subset problem \VertexCoveringGraphProblem{} can be extended such that \OnlineVertexCoveringGraphProblem{} is reducible from \qsatgame{} as follows.
Recall that $G_\varphi$ is the gadget reduction graph of a fixed but arbitrary instance of \VertexCoveringGraphProblem.
\begin{enumerate}
	\begin{minipage}[t]{0.52\textwidth}
	\item Add fake clause gadgets for all clauses that are not in the \qsatgame{} instance
		$$
			G_\varphi^\prime = G_\varphi \bigkringel_{c' \notin C} \FakeClauseGadget(c')\text{ .}
		$$
	\end{minipage}
	\hfill
	\begin{minipage}[t]{0.36\textwidth}
	\item Add dependency reveal gadgets for all $\forall$-variables $x$
		$$
			G_\varphi^{\prime\prime} = G_\varphi^\prime \bigkringel_{\stackrel{x \in X}{x \text{ is } \forall}} \DependencyRevealGadget(x)\text{ .}
		$$
	\end{minipage}
	\item Add ID gadgets to all vertices 
	$$
	G_\varphi^{\prime\prime\prime} = G_\varphi^{\prime\prime} \bigkringel_{v \in V(G_\varphi'')} \IdGadget(v) \text{ .}
	$$
\end{enumerate}

Then, if all gadgets can be constructed in polynomial time, $G_\varphi^{\prime\prime\prime}$ is the corresponding reduction graph of \OnlineVertexCoveringGraphProblem.
The gadget reduction also implies the following gadget properties, which individually have to be proven for a specific problem.
\begin{enumerate}
	\item The fake clause gadgets are self-contained. 
	\item The dependency reveal gadgets are self-contained.
	\item The ID gadgets are self-contained.
	\item In $G_\varphi^{\prime\prime\prime}$, each solution dependent vertex which is not in a literal gadget of a $\forall$-variable has a unique revelation subgraph.
	\item In $G_\varphi^{\prime\prime\prime}$, the two literal vertices of a $\forall$-variable have the same revelation subgraph, but different from vertices of any other gadget.
	\item In $G_\varphi^{\prime\prime\prime}$, each vertex that is solution independent or part of an extension gadget has a revelation subgraph that allows for an optimal decision.
\end{enumerate}

From the above construction, the following \Cref{generalReductionCovering:lem:one-to-oneCorrespondence,generalReductionCovering:lem:optimalGamestrategyAdversary,generalReductionCovering:lem:decisionOnVariableGadget}, are fulfilled such that \OnlineVertexCoveringGraphProblem{} is proven to be \PSPACE-hard in the following \Cref{generalReductionCovering:thm:PSPACECompleteness}.

\begin{restatable}{lemma}{generalReductionCoveringlemonetooneCorrespondence}\label{generalReductionCovering:lem:one-to-oneCorrespondence}
	In the construction of the reduction, there is a one-to-one correspondence between the solution of the problem \OnlineVertexCoveringGraphProblem{} and \qsatgame, if there is a one-to-one correspondence between the solutions in the gadget reduction from \VertexCoveringGraphProblem{} and \sat.
	The equivalence is computable in \PTIME.
\end{restatable}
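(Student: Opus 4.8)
The plan is to lift the hypothesized offline bijection—between satisfying assignments of the underlying \sat{} formula and valid vertex subsets of $G_\varphi$—to a bijection between winning strategies of the two games, while insisting that the player controlling each choice is preserved. First I would discharge the extension gadgets via self-containment: since the fake clause gadgets, dependency reveal gadgets, and ID gadgets are each self-contained (gadget properties (1)--(3)), an optimal solution on $G_\varphi^{\prime\prime\prime}$ decomposes as the disjoint union of an optimal solution on $G_\varphi$ and the forced optimal contributions of the extension gadgets. Hence valid subsets of $G_\varphi^{\prime\prime\prime}$ of the prescribed size $k$ are in bijection with valid subsets of $G_\varphi$, and therefore, by hypothesis, with satisfying assignments. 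This reduces the task to matching the \emph{interactive} choices of the two games against this static assignment$\leftrightarrow$subset dictionary, rather than merely matching outcomes.

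Next I would describe the correspondence at the level of plays and then promote it to strategies, since this is a statement about games and not about isolated assignments. A complete play of the online game reveals every vertex of $G_\varphi^{\prime\prime\prime}$ and records a decision for each; restricting those decisions to the literal vertices $v_\ell, v_{\overline{\ell}}$ yields a truth assignment, which is exactly a play of the \qsatgame{}. Conversely, a \qsatgame{} play (a full assignment produced in quantification order) together with the dictionary prescribes the decision on every remaining vertex, with the extension-gadget decisions pinned down by self-containment, reconstructing a unique online play. The crux is that this matching respects who controls each move: by gadget property (5) the two literal vertices of a $\forall$-variable share a revelation subgraph, so when the adversary reveals one the algorithm cannot distinguish them and its decision is effectively forced, mirroring the $\forall$-player setting that variable; by gadget property (4) every solution-dependent vertex outside a $\forall$-literal gadget has a unique revelation subgraph, so the algorithm genuinely chooses the value of each $\exists$-variable just as the $\exists$-player does, with the dependency reveal gadgets supplying the already-decided $\forall$-values needed to evaluate that choice; and by gadget property (6) the algorithm can still decide solution-independent and extension vertices optimally, so the reconstructed decisions coincide with the dictionary.

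Having matched plays, I would promote the correspondence to strategies: an $\exists$-player strategy is a map from the history of revealed $\forall$-values to the next $\exists$-value, and under the dictionary this is precisely a rule for the algorithm's decision on each $\exists$-literal gadget as a function of the $\forall$-values revealed so far, with all other decisions determined. The two encodings carry the same data, so the maps are mutually inverse; and since a satisfying assignment corresponds to a valid subset of size $k$ and vice versa, a winning $\exists$-strategy corresponds to an online strategy that always produces a valid solution, i.e.\ a winning online strategy. Computability in \PTIME{} is then immediate, as evaluating either direction is one application of the assumed poly-time offline bijection together with local, per-vertex gadget lookups. The main obstacle I anticipate is making the strategy-level correspondence watertight against the adversary's full freedom: a winning online strategy must succeed against every reveal order, not only the quantification order. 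I would handle this by observing that the self-contained ID and fake clause gadgets make every non-$\forall$-literal decision independent of the reveal order, so only the $\forall$-literal moves remain genuinely adversarial; the separate claim that deviating from quantification order cannot help the adversary is exactly what the companion lemma on the adversary's optimal strategy supplies, and here I only need the structural fact that, for the canonical order, the choice data of the two games coincide.
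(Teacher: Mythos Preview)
Your proposal is correct and follows essentially the same approach as the paper: first use self-containment of the extension gadgets (properties (1)--(3)) to lift the offline bijection from $G_\varphi$ to $G_\varphi^{\prime\prime\prime}$, then invoke gadget properties (4)--(6) so that the online algorithm can identify and correctly decide every vertex except the $\forall$-literal vertices, which the adversary effectively controls by simulating the algorithm. Your treatment is considerably more careful than the paper's terse argument---in particular your explicit distinction between plays and strategies and your acknowledgement that full adversarial reveal freedom is deferred to the companion lemma---but the underlying route is the same.
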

\begin{longversion}
	\begin{proof}
		The one-to-one-correspondence is preserved by the definition of self-con\-tained extension gadgets.
		All graph extensions are based on self-contained gadgets.
		Thus, the original solution is preserved and only complemented by the disjoint partial solution on all extension gadgets.
		
		The ID gadgets ensure that the optimal decision for vertices of the same degree is unique (by Gadget Property $4$-$6$), except for literal vertices of $\forall$-variables.
		Thus, a one-to-one correspondence between vertices of the map and the actual online game instance is easy to find by the online algorithm.
		Therefore, the online algorithm is able to decide whether to put a vertex in the solution or not by vertex degree for all vertices, except for literal vertices of $\forall$-variables.
		On the other hand, the adversary can decide the assignment of $\forall$-variables, as it is able to simulate the online algorithm and predict its decision.
	\end{proof}
\end{longversion}

In the following, we show that the adversary has to reveal one literal vertex of each variable gadget before revealing vertices of other gadgets (except ID gadgets). 
Furthermore, the adversary has to adhere to the quantification order of the variables when revealing the first literal vertices of each gadget.
If the adversary deviates from this strategy, it may allow the online algorithm to decide the truth assignment of $\forall$-variables.
This may allow the algorithm to win a game based on an unsatisfiable formula.
Thus, an optimal adversary strategy always follows the quantification order.
\begin{restatable}{lemma}{generalReductionCoveringlemoptimalGamestrategyAdversary}\label{generalReductionCovering:lem:optimalGamestrategyAdversary}
	Every optimal game strategy for the adversary adheres to the reveal ordering
	\begin{align}
		\LiteralGadget{1} \ or \ \NegLiteralGadget{1} < \LiteralGadget{2} \ or \ \NegLiteralGadget{2} < \cdots &< \LiteralGadget{n} \ or \ \NegLiteralGadget{n},&  \\
		\LiteralGadget{} \ or \ \NegLiteralGadget{} &< \ClauseGadget(C_j), &\text{ for all }\ell \in C_j \in C,  \\
		\LiteralGadget{} \ or \ \NegLiteralGadget{} &< \FakeClauseGadget(C'_j), &\text{ for all }\ell \in C'_j \notin C,  \\
		\LiteralGadget{} \ or \ \NegLiteralGadget{} &< \DependencyRevealGadget(x), &\text{ for all }x \in X. 
	\end{align}
\end{restatable}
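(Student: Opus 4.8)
The plan is to prove the contrapositive in the form of a domination argument: I will show that any adversary strategy violating one of the four reveal constraints can be matched (and, in the decisive cases, beaten) by the online algorithm, so that deviating never improves the adversary's outcome and therefore cannot belong to an optimal strategy. Concretely, I fix an adversary strategy and suppose that at some step it reveals a vertex $w$ that, according to one of the constraints, should come after the literal of a $\forall$-variable $x_i$ whose literals $v_\ell, v_{\overline\ell}$ have not yet been revealed. I then exhibit an online strategy that is at least as good under this deviation as the intended quantification-order play is under the canonical reveal order.

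The key mechanism is the dependency reveal gadget. In each of the four cases the prematurely revealed vertex $w$ is either solution dependent on some variable $x_m$ with $m \ge i$, or a vertex of $\DependencyRevealGadget(x_i)$ itself. For constraint (1), $w$ is the literal of a later variable $x_j$ with $j \ge i$, which is solution dependent on $x_j$ by \Cref{intro:def:solutionDependentVertices}(1). For constraints (2) and (3), $w$ lies in $\ClauseGadget(C_j)$ or $\FakeClauseGadget(C'_j)$; as in the vertex cover example following \Cref{intro:def:solutionDependentVertices}, such clause and fake-clause vertices are solution dependent on their variable, hence on some $x_m$ with $m \ge i$. For constraint (4), $w$ belongs to $\DependencyRevealGadget(x_i)$, whose purpose is precisely to break the symmetry of the literals. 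In every case the defining property of the dependency reveal gadget (\Cref{generalReductionCovering:def:dependencyRevealGadget}) lets the algorithm uniquely identify $v_\ell$ and $v_{\overline\ell}$ before $x_i$ is decided.

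Given this identification, the $\forall$-variable $x_i$ effectively degenerates into an $\exists$-variable: since the algorithm can now tell the two literals apart, it may choose the truth value of $x_i$ itself rather than having it dictated by the adversary. I would then invoke the one-to-one correspondence of \Cref{generalReductionCovering:lem:one-to-oneCorrespondence} together with the self-containment of all extension gadgets (Gadget Properties 1--3) to translate this back to the \qsatgame{}: handing control of a universally quantified variable to the $\exists$-player can only enlarge its winning region, as $\forall x\,\psi \Rightarrow \exists x\,\psi$. Hence the deviation is weakly dominated for the adversary, and strictly dominated whenever $x_i$ is a variable the adversary must control to win (i.e. the formula is false and $x_i$ is pivotal), turning a win into a loss. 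Reorderings that trigger no gadget, for instance permuting variables inside a single quantifier block, leave control and information unchanged and are therefore inconsequential, so the canonical quantification order may be assumed; this yields constraint (1) and, together with the above, constraints (2)--(4).

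The hard part will be the domination step. It has to be phrased at the level of the existence of a winning online strategy rather than per play, because the adversary knows the deterministic algorithm and can always simulate it forward; the argument must therefore transform a winning strategy against the quantification-order adversary into one that still wins under the deviation, by ignoring the prematurely leaked identity when it is not advantageous and exploiting it otherwise. I also have to be careful about timing -- the identification granted by the dependency reveal gadget must be available at the moment $x_i$ is decided, not merely eventually -- and to use self-containment (\Cref{generalReductionCovering:lem:one-to-oneCorrespondence} and Gadget Properties 1--3) to guarantee that these reorderings change the attainable solution value only through the intended literal decisions.
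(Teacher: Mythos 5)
Your proposal takes essentially the same route as the paper's proof: a weak-dominance argument in which any premature reveal leaks information that degenerates a pending $\forall$-variable into an $\exists$-variable (which can only help the online algorithm and hence never benefits the adversary), while deviations that trigger no information leak, such as an adjacent $\exists$/$\forall$ swap, leave the game unchanged. The only cosmetic difference is that the paper proves constraints (2) and (3) directly from the leak of which literals a (fake) clause contains and whether it is fake, rather than routing uniformly through the dependency-reveal property, and it handles constraint (1) by an explicit case analysis over the quantifier types of the swapped variables.
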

\begin{longversion}
	\begin{proof}
		We prove each proposition one after another.
		\begin{enumerate}
			\item[(1)] Assume a vertex of the variable gadget of $x_j$ is revealed before any vertex of the variable gadget of $x_i$ is revealed for $i < j$.
			The following cases may apply:
			\begin{enumerate}
				\item \textit{$x_i$ and $x_j$ are $\exists$-variables}\\
				Then, a $\forall$-variable $x_k$ exists, $i < k < j$, for which the dependency reveal gadget is revealed before the vertices of its variable gadget are revealed.
				Therefore, the variable $x_k$ degenerates to an $\exists$-variable.
				\item \textit{$x_i$ is an $\exists$-variable and $x_j$ is an $\forall$-variable}
				\begin{itemize}
					\item[] Case 1: $j > i+1$
					Then, a $\forall$-variable $x_k$ exists, $i < i+1 \leq k < j$, for which the dependency reveal gadget is revealed before the vertices of its variable gadget are revealed.
					Therefore, the variable $x_k$ degenerates to an $\exists$-variable.
					\item[] Case 2: $j = i+1$
					Then, the decision on $x_j$ happens before $x_i$.
					However, the online algorithm is not able to detect which decision took place.
					Consequently, it has no additional information for the $\exists$-variable $x_i$.
					This does not change the game at all.
				\end{itemize}
				\item \textit{$x_i$ is an $\forall$-variable and $x_j$ is an $\exists$-variable}\\
				Then, the dependency reveal gadget for $x_j$ is revealed and $x_j$ degenerates to an $\exists$-variable.
				\item \textit{$x_i$ and $x_j$ are $\forall$-variables}\\
				Then, the dependency reveal gadget for $x_j$ is revealed and $x_j$ degenerates to an $\exists$-variable.
			\end{enumerate}
			\item[(2)] By revealing a clause gadget before the corresponding literal gadgets,
			it is revealed which literals are in the clause and whether the clause is a fake clause or not.
			Thus, it is dominant to reveal that information after revealing the literals.
			\item[(3)] By revealing a fake clause gadget before the corresponding literal gadgets,
			it is revealed which literals are in the fake clause and whether the fake clause is a fake clause or not.
			Thus, it is dominant to reveal that information after revealing the literals gadget.
			\item[(4)] The dependency reveal gadgets connected to a literal gadget reveal the information which literal gadget represents the negated literal. Thus, the $\forall$-variable degenerates to an $\exists$-variable.
		\end{enumerate}
		
		The degeneration of a $\forall$-variable to an $\exists$-variable gives the online algorithm the possibility to satisfy a possibly unsatisfiable formula, because revelation of the dependency reveal gadget is easily detectable in \PTIME.
	\end{proof}
\end{longversion}

\begin{restatable}{lemma}{generalReductionCoveringlemdecisionOnVariableGadget}\label{generalReductionCovering:lem:decisionOnVariableGadget}
	The vertex assignments of an $\exists$-variable gadget (resp. $\forall$-variable gadget) are equivalent to the decision of the $\exists$-player (resp. $\forall$-player) on an $\exists$-quantifier (resp. $\forall$-quantifier) in the \qsatgame{}.
	The equivalence is computable in \PTIME.
\end{restatable}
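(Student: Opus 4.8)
The plan is to establish the equivalence separately for the two quantifier types, in each case giving both directions of the correspondence and using the variable gadget to guarantee well-definedness. Recall that the defining function of the variable gadget forces exactly one of the two literal vertices $v_\ell, v_{\overline{\ell}}$ of a variable $x$ into any optimal solution. I would use this to read off a truth value for $x$ from a solution of \OnlineVertexCoveringGraphProblem{} (declare $x$ true iff $v_\ell$ is chosen) and, conversely, to translate a truth value into the choice of which literal vertex to include. This translation, restricted to a single variable gadget, is evidently computable in \PTIME, settling the complexity claim once the equivalence of decisions is shown.

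For an $\exists$-variable $x$, by \Cref{generalReductionCovering:lem:optimalGamestrategyAdversary} an optimal adversary reveals a literal vertex of $x$ in quantification order, before any clause, fake clause, or dependency reveal gadget of $x$ or of later variables. By Gadget Property $4$, each solution dependent vertex outside a $\forall$-literal gadget, in particular each literal vertex of $x$, has a unique revelation subgraph, so via the map (\Cref{generalReductionCovering:lem:one-to-oneCorrespondence}) the algorithm identifies exactly which literal vertex is being revealed. The algorithm is therefore free to include or exclude it, and the variable gadget forces the complementary decision on the other literal vertex. Thus the algorithm's include/exclude choice is an unconstrained binary assignment to $x$, which is precisely the $\exists$-player's move; conversely, every $\exists$-player move is realized by the corresponding inclusion choice.

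For a $\forall$-variable $x$ the situation is reversed, and this is where the argument is most delicate. After adding fake clause gadgets, the two literal vertices of $x$ have the same revelation subgraph, which is distinct from that of any other gadget (Gadget Property $5$). Hence, when the adversary reveals the first literal vertex of $x$, the online algorithm cannot tell $v_\ell$ from $v_{\overline{\ell}}$ and must decide blindly. Because the adversary knows the deterministic algorithm, it can simulate this forced decision and then choose which of the two physical literal vertices to present first, thereby fixing the truth value of $x$ as it pleases; this is exactly the $\forall$-player's move, and each of the two truth values is realizable by the appropriate reveal order. The dependency reveal gadget $\DependencyRevealGadget(x)$ subsequently exposes which literal was which, by its defining property triggered once a solution dependent vertex of some $x_j$ with $j \geq i$ is revealed, so the algorithm regains the information needed to keep its later decisions consistent with the now-fixed assignment.

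The main obstacle is precisely this $\forall$-direction: I must argue carefully that indistinguishability (Gadget Property $5$) forces the algorithm into a single predetermined decision that the adversary can both \emph{predict} by simulation and \emph{override} by its choice of reveal order, so that each truth value is genuinely realizable and the induced assignment is well-defined. The remaining steps, namely invoking the variable gadget for well-definedness, the map-based identification for the $\exists$-case, and the \PTIME bound on reading off assignments from solutions and back, are routine given the already established \Cref{generalReductionCovering:lem:one-to-oneCorrespondence,generalReductionCovering:lem:optimalGamestrategyAdversary}.
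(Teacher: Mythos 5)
Your proposal is correct and follows essentially the same route as the paper's proof: the $\exists$-case via unique revelation subgraphs (Gadget Property $4$) together with \Cref{generalReductionCovering:lem:optimalGamestrategyAdversary}, and the $\forall$-case via the indistinguishability of the two literal vertices (Gadget Property $5$, guaranteed by the fake clause gadgets) combined with the adversary's ability to simulate the deterministic algorithm and pick the reveal order, with the dependency reveal gadget restoring consistency afterwards. The paper's proof differs only in minor bookkeeping (it additionally cites Gadget Property $6$ and justifies the \PTIME{} claim by a degree check), so no gap remains.
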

\begin{longversion}
	\begin{proof}
		By the dominating strategy of the adversary described in \Cref{generalReductionCovering:lem:optimalGamestrategyAdversary}, for all $i, 1 \leq i \leq n$, the adversary reveals all vertices of the variable gadget of variable $x_i$.
		Thereby, the online algorithm has to take a decision after each revealed vertex.
		\begin{itemize}
			\item[($\exists$)] Due to Gadget Properties $4$ and $6$, the online algorithm is able to detect which exact vertex of the variable gadget is revealed.
			Thus, the online algorithm is able to decide which vertices to take into the solution to encode both of the truth values of the variable into the solution.
			This implies that the online algorithm takes the decision on the variable as in the \qsatgame{}.
			\item[($\forall$)] Because of \Cref{generalReductionCovering:lem:optimalGamestrategyAdversary}, the adversary will reveal the literal gadgets first.
			The literal gadgets of different variables may only be connected by a path of length $\geq 2$, if there is a connection via a clause gadget or dependency reveal gadget.
			These do not reveal information as every possible clause is covered either by a clause gadget or a fake clause, but it is not revealed whether the connection is established by clause or fake clause gadget unless a vertex of a clause gadget or fake clause gadget is revealed.
			The dependency reveal gadget reveals only additional information if a reveal ordering which does not correspond to \Cref{generalReductionCovering:lem:optimalGamestrategyAdversary} is used.
			Due to Gadget Property $5$, the online algorithm is not able to detect whether a vertex that encodes an assignment to true or false is revealed over the degree.
			Thus, the online algorithm is not able to detect which literal gadget resembles the true or false assignment.
			
			Therefore, a reveal ordering of the variable gadgets of the $\forall$-variables exists that forces every fixed deterministic online algorithm to choose the options that prevent the online algorithm from winning if and only if the \qsatgame{} formula is unsatisfiable.
		\end{itemize}
		The computation is in \PTIME{} because only the degree of the vertex needs to be checked.
	\end{proof}
\end{longversion}

Therefore, the solutions to the formula in the \qsatgame{} and the solutions to the online vertex subset game are equivalent.
Thus, the reduction graph $G_\varphi^{\prime\prime\prime}$ is a valid reduction from \qsatgame{} because the one-to-one correspondence between solutions is preserved, which concludes the proof of \Cref{generalReductionCovering:thm:PSPACECompleteness}.
At last, the online algorithm is able to win the game if and only if the \qsatgame{} is winnable.

\begin{restatable}{theorem}{generalReductionCoveringthmPSPACECompleteness}\label{generalReductionCovering:thm:PSPACECompleteness}
	If \VertexCoveringGraphProblem{} is gadget reducible from \sat{} and \Cref{generalReductionCovering:lem:one-to-oneCorrespondence,generalReductionCovering:lem:optimalGamestrategyAdversary,generalReductionCovering:lem:decisionOnVariableGadget} hold, then \OnlineVertexCoveringGraphProblem{} is \PSPACE-complete.
\end{restatable}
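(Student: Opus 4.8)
The plan is to prove \PSPACE-completeness in the two standard parts, membership and hardness. Membership follows immediately from \Cref{generalReductionCovering:thm:PspaceContainment}: a vertex subset problem \VertexCoveringGraphProblem{} has polynomial-time verifiable constraints and therefore lies in \NP{} (as is the case for \vc, \is{}, and \ds{}), so its online game version \OnlineVertexCoveringGraphProblem{} is in \PSPACE. Hence the entire weight of the argument rests on establishing \PSPACE-hardness.

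For hardness, I would present the reduction from \qsatgame{} laid out in the framework. Given a fully quantified formula $\varphi$, the reduction graph $G_\varphi^{\prime\prime\prime}$ is obtained from the offline reduction graph $G_\varphi$ by composing in the fake clause gadgets, then the dependency reveal gadgets for the $\forall$-variables, and finally the ID gadgets for every vertex. Since each gadget is assumed constructible in polynomial time and only polynomially many are added, $G_\varphi^{\prime\prime\prime}$ is computable in polynomial time and serves as the unlabeled map. The goal is then to show that the online algorithm has a winning strategy on $G_\varphi^{\prime\prime\prime}$ (for the appropriate bound $k$) if and only if the $\exists$-player wins \qsatgame{} on $\varphi$.

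The correctness proof chains the three lemmas. \Cref{generalReductionCovering:lem:optimalGamestrategyAdversary} restricts attention to adversary strategies that reveal one literal vertex of each variable gadget in quantification order, since any deviation degenerates a $\forall$-variable into an $\exists$-variable and can only benefit the online algorithm. Under such strategies, \Cref{generalReductionCovering:lem:decisionOnVariableGadget} shows that the assignment encoded by each $\exists$-gadget is chosen by the online algorithm, whereas the assignment of each $\forall$-gadget is forced by the adversary, because the online algorithm cannot tell the true literal vertex from the false one (Gadget Property 5). This alternation of informed and blind decisions mirrors exactly the alternation of the $\exists$- and $\forall$-players in \qsatgame. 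Finally, \Cref{generalReductionCovering:lem:one-to-oneCorrespondence} lifts the one-to-one correspondence of the underlying offline gadget reduction to the online setting: self-containment of all extension gadgets preserves the offline correspondence, so an encoded assignment yields a valid \VertexCoveringGraphProblem{} solution meeting the bound $k$ precisely when the assignment satisfies $\varphi$. Combining the three, a winning $\exists$-player strategy is simulated by the online algorithm to build a valid solution, and a winning $\forall$-player strategy is simulated by the adversary to force an unsatisfiable assignment admitting no valid solution of the required size.

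The main obstacle I expect is the ``only if'' direction, namely that the adversary can always force a loss when $\varphi$ is lost by the $\exists$-player. This demands that the information hiding of Gadget Property 5 is genuinely uncircumventable: the online algorithm must provably have no means of distinguishing the two literal vertices of a $\forall$-variable before it must commit, even in the presence of clause, fake clause, and dependency reveal gadget connections, so that its blind choice can be punished by the adversary's simulation of the optimal $\forall$-player. This is precisely what \Cref{generalReductionCovering:lem:decisionOnVariableGadget} abstracts away, so once the three lemmas are in hand the theorem reduces to assembling them, checking that the reduction remains polynomial, and verifying the equivalence in both directions.
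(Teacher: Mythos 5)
Your proposal is correct and takes essentially the same route as the paper: membership via \Cref{generalReductionCovering:thm:PspaceContainment}, and hardness by polynomial-time construction of $G_\varphi^{\prime\prime\prime}$ followed by chaining \Cref{generalReductionCovering:lem:optimalGamestrategyAdversary,generalReductionCovering:lem:decisionOnVariableGadget,generalReductionCovering:lem:one-to-oneCorrespondence} to conclude the online algorithm wins if and only if the \qsatgame{} formula is winnable. The only cosmetic difference is that you spell out the tacit assumption that \VertexCoveringGraphProblem{} lies in \NP{} (needed for the containment theorem), which the paper leaves implicit.
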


\section{Vertex Cover}
\label{vc:sec:vertexCover}
In this section, we use our reduction framework to show that the online vertex subset game based on the \vc{} problem, the \ovg, is \PSPACE-complete.
\vc{} was originally shown to be \NP-complete by Karp \cite{DBLP:conf/coco/Karp72} with a reduction from \textsc{Clique}.
However, since our reduction framework extends reductions from \sat, we use an alternative reduction from Garey and Johnson \cite{garey1979computers}.

Let $\varphi$ be the \sat-formula, let $X$ be the set of $n$ variables and let $C$ be the set of $m$ clauses of $\varphi$.
We construct the following graph $G_\varphi = (V, E)$:
For each variable $x_i$, introduce a variable gadget consisting of two vertices, connected by an edge.
One of these vertices represents the positive literal, while the other represents the negative literal.
Thus we refer to these vertices as literal vertices. 
For each clause $C_j, \indexSet{j}{m}$, we construct a clause gadget, which is a triangle of vertices, where each vertex represents one of the literals in $C_j$.
Finally, each vertex of a clause is connected to the literal it represents.
An example of this construction is shown in \Cref{vc:fig:baseReduction}.

\begin{figure}[!ht]
	\centering
	\scalebox{1}{
	\begin{tikzpicture}[scale=0.5,
						node/.style = {shape=circle, draw, inner sep=0pt, minimum size=0.25cm},
						textnode/.style = {shape=circle, draw, inner sep=0pt, minimum size=0.4cm},
						smallnode/.style = {shape=circle, draw, inner sep=0pt, minimum size=0.1cm},
						box/.style = {rectangle, fill=gray!20, rounded corners, fill opacity=1, inner sep=1pt}]
		\node (bb1) at (-1.3, -3.1) {};
		\node (bb2) at (7.6, 0.6) {};
		\node[box, fill=gray!10, fit=(bb1)(bb2)] (bbox) {};
		\node[above left] (bbtext) at (bbox.south east) {$G_\varphi$};
		
		\node (bt1) at (-1, -0.3) {};
		\node (bf1) at (2.3, 0.3) {};
		\node[box, fit=(bt1)(bf1)] (x1box) {};
		\node[right] (x1text) at (x1box.west) {$x_1$};
		\node[node, thick] (t1) at (0, 0) {};
		\node[node, thick] (f1) at (2, 0) {};
		\path[-, thick] (f1) edge (t1);
		
		\node (bt2) at (4, -0.3) {};
		\node (bf2) at (7.3, 0.3) {};
		\node[box, fit=(bt2)(bf2)] (x2box) {};
		\node[right] (x2text) at (x2box.west) {$x_2$};
		\node[node, thick] (t2) at (5, 0) {};
		\node[node, thick] (f2) at (7, 0) {};
		\path[-, thick] (f2) edge (t2);
		
		\node (bC11) at (2.3, -1.2) {};
		\node (bC12) at (-1, -2.8) {};
		\node[box, fit=(bC11)(bC12)] (C1box) {};
		\node[right] (C1text) at (C1box.west) {$C_1$};
		\node[node, thick] (C11) at (1, -1.5) {};
		\node[node, thick] (C12) at (0, -2.5) {};
		\node[node, thick] (C13) at (2, -2.5) {};
		
		\path[-, thick] (C11) edge (C12);
		\path[-, thick] (C12) edge (C13);
		\path[-, thick] (C13) edge (C11);
		
		\path[-, thick] (C11) edge (f1);
		\path[-, thick] (C12) edge (t1);
		\path[-, thick] (C13) edge (t2);

	\end{tikzpicture}
}
	\caption{The reduction graph for the reduction from \sat{} to \vc{} for instance $\varphi = \left(x_1 \vee \overline{x}_1 \vee x_2\right)$.}
	\label{vc:fig:baseReduction}
\end{figure}

The dependencies in $G_\varphi$ are of the type if a literal vertex is not contained in a solution, then all clause vertices representing the same literal must be contained in that solution.
Therefore, all vertices in $G_\varphi$ are solution dependent.

The \ovg{} has a graph $G$ and a $k \in \mathbb{N}$ as input.
It asks whether there is a winning strategy for the online algorithm, that is, it finds a vertex cover of size at most $k$ for every reveal order while knowing an isomorphic copy of $G$.

\begin{restatable}{theorem}{vcthpspaceComplete}\label{vc:th:pspaceComplete}
	The \ovg{} with the neighborhood reveal model and a map is \PSPACE-complete.
\end{restatable}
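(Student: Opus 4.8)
The plan is to invoke the general framework established in \Cref{generalReductionCovering:thm:PSPACECompleteness}. Since that theorem reduces the problem to two ingredients, I would first note that \PSPACE-membership follows immediately from \Cref{generalReductionCovering:thm:PspaceContainment}, because \vc{} is in \NP. For \PSPACE-hardness, the strategy is to verify the hypotheses of \Cref{generalReductionCovering:thm:PSPACECompleteness} for the Garey--Johnson reduction just described: namely that \vc{} is gadget-reducible from \sat{} (which the construction of $G_\varphi$ exhibits, with the one-to-one correspondence between vertex covers of size $n + 2m$ and satisfying assignments being standard), and that \Cref{generalReductionCovering:lem:one-to-oneCorrespondence,generalReductionCovering:lem:optimalGamestrategyAdversary,generalReductionCovering:lem:decisionOnVariableGadget} hold. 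The bulk of the work is therefore constructing concrete instances of the three extension-gadget types (fake clause, dependency reveal, and ID gadgets) for the vertex cover problem and proving they satisfy the six gadget properties listed before \Cref{generalReductionCovering:lem:one-to-oneCorrespondence}.

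Concretely, I would first design the \emph{fake clause gadget} $\FakeClauseGadget(C'_j)$ for \vc: it should be a triangle connected to the appropriate literal vertices exactly as a real clause gadget, but arranged so that it is self-contained, meaning its two-vertex contribution to the cover is forced regardless of which literals outside it are chosen. I would verify that after adding fake clauses for every $C'_j \notin C$, every literal vertex has the same degree and hence an isomorphic revelation subgraph (Gadget Property of \Cref{generalReductionCovering:lem:optimalGamestrategyAdversary}). Next, for the \emph{dependency reveal gadget} $\DependencyRevealGadget(x_i)$ I would build a self-contained attachment whose revelation lets the algorithm read off which of $v_\ell, v_{\overline\ell}$ the adversary selected, but only once a solution-dependent vertex of a later variable is exposed; the challenge is ensuring this gadget does not itself leak the truth assignment prematurely and does not disturb the cover-size count. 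Finally, the \emph{ID gadgets} $\IdGadget(v)$ must give each solution-dependent vertex (except the two literal vertices of a $\forall$-variable, which must stay indistinguishable by Gadget Property~5) a unique revelation subgraph while remaining self-contained; for \vc{} this can be done by attaching pendant structures of distinct sizes whose optimal cover is forced and independent of the rest of the graph.

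Once these three gadgets are exhibited and each of the six gadget properties is checked for the concrete vertex cover construction, the three lemmas apply verbatim and \Cref{generalReductionCovering:thm:PSPACECompleteness} delivers \PSPACE-completeness. I would also confirm that all gadgets are constructible in polynomial time: the number of fake clauses is $O(n^3)$, there is one dependency reveal gadget per $\forall$-variable, and one ID gadget per vertex, so the total graph size stays polynomial.

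I expect the main obstacle to be the \emph{self-containment} verifications, particularly for the dependency reveal gadget. Self-containment for \vc{} requires showing that the optimal cover on $G_\varphi^{\prime\prime\prime}$ decomposes as the disjoint union of the optimal cover on the original reduction graph and the forced optimal covers on each extension gadget, with no interaction across the connecting edges $E_{con}$. This is delicate because the connecting edges must themselves be covered, and I must argue that they are always covered ``for free'' by the forced gadget vertices rather than requiring extra vertices from $G_\varphi$. The dependency reveal gadget is the hardest case since it must simultaneously be self-contained, conceal the $\forall$-assignment until the right moment, and reveal it precisely when a later solution-dependent vertex appears; balancing these competing requirements while keeping the cover-size bookkeeping exact will require the most careful gadget design and case analysis.
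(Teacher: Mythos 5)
Your overall route is exactly the paper's: containment via \Cref{generalReductionCovering:thm:PspaceContainment}, then instantiating the three extension-gadget types for the Garey--Johnson reduction, verifying the six gadget properties and polynomial-time constructibility, and invoking \Cref{generalReductionCovering:thm:PSPACECompleteness}. However, your sketch of the fake clause gadget contains a genuine error that would make the construction fail as stated. You describe \FakeClauseGadget{} as ``a triangle connected to the appropriate literal vertices exactly as a real clause gadget'' with ``its two-vertex contribution to the cover forced.'' A bare triangle attached like a clause gadget \emph{is} a real clause gadget: its optimal local cover has size $2$, and then the connecting edge at the third triangle vertex must be covered by the adjacent literal vertex. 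That is precisely an interaction across $E_{con}$, so the gadget is not self-contained --- it enforces satisfaction of the fake clause and thereby changes the underlying formula. The fix, which is the paper's \Cref{vc:def:fakeClause}, is to attach two pendant (degree-$1$) vertices to each triangle vertex; this forces \emph{all three} triangle vertices into every optimal cover (local optimum $3$, not $2$), so all connecting edges are covered ``for free'' and the disjoint-union decomposition of optimal solutions goes through. The pendants also serve Gadget Property~4: they raise the fake-triangle vertices to degree $5$, distinguishable from real clause vertices once revealed, while leaving fake and real clauses indistinguishable as long as only variable-gadget vertices have been revealed.

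Two smaller calibration points where the paper is more concrete than your plan. First, you flag the dependency reveal gadget as the hardest self-containment case, but in the paper it is the easiest: it is a star whose center is connected to the relevant literal and clause vertices, padded with leaves so the center's degree is exactly $3\binom{2n}{3}+2n+1$; since the star has at least two leaves, its center is forced into every optimal cover, which covers all connecting edges, and self-containment is immediate. The real delicacy is not self-containment but the \emph{degree bookkeeping}: the ID gadgets (paths of length $3$ with middle vertices attached to the identified vertex, in carefully chosen numbers $d(\ell)$ depending on $i$ and $d^\forall_<(i)$) must make every solution-dependent vertex's degree unique while giving the two literal vertices of each $\forall$-variable \emph{equal} degree, and the degrees of all extension-gadget vertices must fall outside the range $\bigl[\binom{2n-1}{2}+4, \binom{2n-1}{2}+4n+3\bigr]$ used for literal and clause vertices. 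Second, hardness also requires computing the target solution size $k$ in polynomial time, which the paper does by summing the forced contributions ($n+2m$ for the base graph, $3$ per fake clause, $1$ per dependency reveal gadget, and the explicit ID-gadget counts); your plan verifies polynomial graph size but omits this computation of $k$.
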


The containment of \ovg{} in \PSPACE{} is already established by \Cref{generalReductionCovering:thm:PspaceContainment}.
To show hardness, we extend the above reduction for \vc{} according to our framework.
Therefore, we need to introduce fake clause gadgets, dependency reveal gadgets, and ID gadgets and prove that they fulfill the gadget properties, required by \Cref{generalReductionCovering:lem:one-to-oneCorrespondence,generalReductionCovering:lem:optimalGamestrategyAdversary,generalReductionCovering:lem:decisionOnVariableGadget}. 

\begin{shortversion}
	
	An example for a fake clause gadget is shown in \Cref{vc:fig:baseAndFake}.
	Any optimal vertex cover on the fake clause gadget has size $3$ and contains exactly the triangle representing the clause.
	In the neighborhood reveal model, fake clause gadgets can not be distinguished from real clause gadgets, as long as only vertices of variable gadgets are revealed by the adversary.
	However, as soon as a vertex of the fake clause gadget is revealed, it can be distinguished from a real clause gadget, as the vertex degrees are different.
	
\begin{figure}[!ht]
	\centering
	\scalebox{1}{
	\begin{tikzpicture}[scale=0.5,
						node/.style = {shape=circle, draw, inner sep=0pt, minimum size=0.25cm},
						textnode/.style = {shape=circle, draw, inner sep=0pt, minimum size=0.4cm},
						smallnode/.style = {shape=circle, draw, inner sep=0pt, minimum size=0.1cm},
						box/.style = {rectangle, fill=gray!20, rounded corners, fill opacity=1, inner sep=1pt}]
		\node (bb1) at (-1.3, -3.1) {};
		\node (bb2) at (7.6, 0.6) {};
		\node[box, fill=gray!10, fit=(bb1)(bb2)] (bbox) {};
		\node[above left] (bbtext) at (bbox.south east) {$G_\varphi$};
		
		\node (bt1) at (-1, -0.3) {};
		\node (bf1) at (2.3, 0.3) {};
		\node[box, fit=(bt1)(bf1)] (x1box) {};
		\node[right] (x1text) at (x1box.west) {$x_1$};
		\node[node, thick] (t1) at (0, 0) {};
		\node[node, thick] (f1) at (2, 0) {};
		\path[-, thick] (f1) edge (t1);
		
		\node (bt2) at (4, -0.3) {};
		\node (bf2) at (7.3, 0.3) {};
		\node[box, fit=(bt2)(bf2)] (x2box) {};
		\node[right] (x2text) at (x2box.west) {$x_2$};
		\node[node, thick] (t2) at (5, 0) {};
		\node[node, thick] (f2) at (7, 0) {};
		\path[-, thick] (f2) edge (t2);
		
		\node (bC11) at (2.3, -1.2) {};
		\node (bC12) at (-1, -2.8) {};
		\node[box, fit=(bC11)(bC12)] (C1box) {};
		\node[right] (C1text) at (C1box.west) {$C_1$};
		\node[node, thick] (C11) at (1, -1.5) {};
		\node[node, thick] (C12) at (0, -2.5) {};
		\node[node, thick] (C13) at (2, -2.5) {};
		
		\path[-, thick] (C11) edge (C12);
		\path[-, thick] (C12) edge (C13);
		\path[-, thick] (C13) edge (C11);
		
		\path[-, thick] (C11) edge (f1);
		\path[-, thick] (C12) edge (t1);
		\path[-, thick] (C13) edge (t2);
		
		\node (bC21) at (9.2, -0.7) {};
		\node (bC22) at (13, -2.8) {};
		\node[box, fill=gray!10, fit=(bC21)(bC22)] (C2box) {};
		\node[left] (C2text) at (C2box.east) {\FakeClauseGadget};
		\node[node] (C21) at (11, -1.5) {};
		\node[node] (C22) at (10, -2.5) {};
		\node[node] (C23) at (12, -2.5) {};
		
		\path[-] (C21) edge (C22);
		\path[-] (C22) edge (C23);
		\path[-] (C23) edge (C21);
		
		\path[-, draw=blue, dashed] (C21) edge (f1);
		\path[-, draw=blue, dashed, out=130, in=330, out distance=1cm, in distance=1cm] (C22) edge (t1);
		\path[-, draw=blue, dashed, out=90, in=355, out distance=3cm, in distance=1cm] (C23) edge (f2);
		
		\node[smallnode] (C211) at (10.8, -0.7) {};
		\node[smallnode] (C212) at (11.2, -0.7) {};
		\node[smallnode] (C221) at (9.2, -2.3) {};
		\node[smallnode] (C222) at (9.2, -2.7) {};
		\node[smallnode] (C231) at (12.8, -2.3) {};
		\node[smallnode] (C232) at (12.8, -2.7) {};
		
		\foreach \x in {1,2,3} {
			\foreach \y in {1,2}
			\path[-] (C2\x) edge node[] {} (C2\x\y);
		}
	\end{tikzpicture}
}
	\caption{The reduction graph for the reduction from \sat{} to \vc{} for instance $\varphi = \left(x_1 \vee \overline{x}_1 \vee x_2\right)$.
	The clause $\left(x_1 \vee \overline{x}_1 \vee \overline{x}_2\right)$ does not exist and is represented by a fake clause gadget \FakeClauseGadget.
	The blue dashed edges are the set $E_{con}$ for the fake clause gadget.}
	\label{vc:fig:baseAndFake}
\end{figure}
	The dependency reveal gadget reveals the solution dependencies to the online algorithm. 
	An example is depicted in \Cref{vc:fig:dependency}. 
	Since both the literal vertices and the vertices of clause gadgets are solution dependent, the online algorithm needs to be able to identify which variable they correspond to, and in the case of $\exists$-variables also which literal they correspond to.
	For that, we look at the degrees of all vertices in the graph $G_\varphi^{\prime\prime}$.
	The optimal solution for the dependency reveal gadget always contains exactly the center vertex of the star. 
	
	Finally, we define ID gadgets for literal vertices and all vertices in clause gadgets as they are solution dependent.
	An example is presented in \Cref{vc:fig:Id}. 
	The task of the ID gadget is to enable the algorithm to uniquely  identify vertices of $\exists$-quantified literals and solution dependent vertices as well as to identify the $\forall$-quantified literals such that the literal vertices of one $\forall$-quantified variable are indistinguishable. 
\end{shortversion}

\begin{longversion}
\begin{definition}[Self-contained fake clause gadget for \vc]
	\label{vc:def:fakeClause}
	The fake clause gadget, for non-existing clause $C^\prime_j \notin C$, is a triangle, where each vertex has two additional vertices attached. 
	The vertices of the triangle are connected to the literal vertices they represent.
\end{definition}

An example for a fake clause gadget is shown in \Cref{vc:fig:baseAndFake}.
Any optimal vertex cover on the fake clause gadget has size $3$ and contains exactly the triangle representing the clause.
In the neighborhood reveal model, fake clause gadgets can not be distinguished from real clause gadgets, as long as only vertices of variable gadgets are revealed by the adversary.
However, as soon as a vertex of the fake clause gadget is revealed, it can be distinguished from a real clause gadget, as the vertex degrees are different.

\begin{figure}[!ht]
	\centering
	\scalebox{1}{
	\begin{tikzpicture}[scale=0.5,
						node/.style = {shape=circle, draw, inner sep=0pt, minimum size=0.25cm},
						textnode/.style = {shape=circle, draw, inner sep=0pt, minimum size=0.4cm},
						smallnode/.style = {shape=circle, draw, inner sep=0pt, minimum size=0.1cm},
						box/.style = {rectangle, fill=gray!20, rounded corners, fill opacity=1, inner sep=1pt}]
		\node (bb1) at (-1.3, -3.1) {};
		\node (bb2) at (7.6, 0.6) {};
		\node[box, fill=gray!10, fit=(bb1)(bb2)] (bbox) {};
		\node[above left] (bbtext) at (bbox.south east) {$G_\varphi$};
		
		\node (bt1) at (-1, -0.3) {};
		\node (bf1) at (2.3, 0.3) {};
		\node[box, fit=(bt1)(bf1)] (x1box) {};
		\node[right] (x1text) at (x1box.west) {$x_1$};
		\node[node, thick] (t1) at (0, 0) {};
		\node[node, thick] (f1) at (2, 0) {};
		\path[-, thick] (f1) edge (t1);
		
		\node (bt2) at (4, -0.3) {};
		\node (bf2) at (7.3, 0.3) {};
		\node[box, fit=(bt2)(bf2)] (x2box) {};
		\node[right] (x2text) at (x2box.west) {$x_2$};
		\node[node, thick] (t2) at (5, 0) {};
		\node[node, thick] (f2) at (7, 0) {};
		\path[-, thick] (f2) edge (t2);
		
		\node (bC11) at (2.3, -1.2) {};
		\node (bC12) at (-1, -2.8) {};
		\node[box, fit=(bC11)(bC12)] (C1box) {};
		\node[right] (C1text) at (C1box.west) {$C_1$};
		\node[node, thick] (C11) at (1, -1.5) {};
		\node[node, thick] (C12) at (0, -2.5) {};
		\node[node, thick] (C13) at (2, -2.5) {};
		
		\path[-, thick] (C11) edge (C12);
		\path[-, thick] (C12) edge (C13);
		\path[-, thick] (C13) edge (C11);
		
		\path[-, thick] (C11) edge (f1);
		\path[-, thick] (C12) edge (t1);
		\path[-, thick] (C13) edge (t2);
		
		\node (bC21) at (9.2, -0.7) {};
		\node (bC22) at (13, -2.8) {};
		\node[box, fill=gray!10, fit=(bC21)(bC22)] (C2box) {};
		\node[left] (C2text) at (C2box.east) {\FakeClauseGadget};
		\node[node] (C21) at (11, -1.5) {};
		\node[node] (C22) at (10, -2.5) {};
		\node[node] (C23) at (12, -2.5) {};
		
		\path[-] (C21) edge (C22);
		\path[-] (C22) edge (C23);
		\path[-] (C23) edge (C21);
		
		\path[-, draw=blue, dashed] (C21) edge (f1);
		\path[-, draw=blue, dashed, out=130, in=330, out distance=1cm, in distance=1cm] (C22) edge (t1);
		\path[-, draw=blue, dashed, out=90, in=355, out distance=3cm, in distance=1cm] (C23) edge (f2);
		
		\node[smallnode] (C211) at (10.8, -0.7) {};
		\node[smallnode] (C212) at (11.2, -0.7) {};
		\node[smallnode] (C221) at (9.2, -2.3) {};
		\node[smallnode] (C222) at (9.2, -2.7) {};
		\node[smallnode] (C231) at (12.8, -2.3) {};
		\node[smallnode] (C232) at (12.8, -2.7) {};
		
		\foreach \x in {1,2,3} {
			\foreach \y in {1,2}
			\path[-] (C2\x) edge node[] {} (C2\x\y);
		}
	\end{tikzpicture}
}
	\caption{The reduction graph for the reduction from \sat{} to \vc{} for instance $\varphi = \left(x_1 \vee \overline{x}_1 \vee x_2\right)$.
	The clause $\left(x_1 \vee \overline{x}_1 \vee \overline{x}_2\right)$ does not exist and is represented by a fake clause gadget \FakeClauseGadget.
	The blue dashed edges are the set $E_{con}$ for the fake clause gadget.}
	\label{vc:fig:baseAndFake}
\end{figure}

\begin{restatable}[Gadget Property 1]{lemma}{vclemfakeClauseSelfContained}
	\label{vc:lem:fakeClauseSelfContained}
	The fake clause gadget (\Cref{vc:def:fakeClause}) is self-contained for \vc{}.
\end{restatable}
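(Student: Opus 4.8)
The plan is to verify the self-containment definition directly for the gadget of \Cref{vc:def:fakeClause}: for any base graph $G$ to which the gadget $\FakeClauseGadget(C'_j)=(V_{ext},E_{ext},E_{con})$ is attached, I want to show that a minimum vertex cover of $H = G \circ \FakeClauseGadget(C'_j)$ is the disjoint union of a minimum vertex cover of $G$ and a minimum vertex cover of the gadget, and that the gadget imposes no constraint on the $G$-part. First I would analyze the gadget in isolation. Writing $c_1,c_2,c_3$ for the triangle vertices and $p_{i,1},p_{i,2}$ for the two pendants attached to $c_i$, I would observe that the three pendant edges $\{c_1,p_{1,1}\},\{c_2,p_{2,1}\},\{c_3,p_{3,1}\}$ form a matching of size three, so every vertex cover restricted to $V_{ext}$ uses at least three gadget vertices; since $\{c_1,c_2,c_3\}$ already covers the triangle, all six pendant edges, and every edge of $E_{con}$, the minimum cover of $(V_{ext},E_{ext})$ has size exactly three, as claimed in the text preceding the lemma.

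For the upper bound on $H$, I would take any minimum vertex cover $M_G$ of $G$ and set $S = M_G \cup \{c_1,c_2,c_3\}$. Because each connecting edge in $E_{con}$ is incident to some $c_i$, it is covered by $S$ irrespective of $M_G$; combined with $M_G$ covering $E(G)$ and $\{c_1,c_2,c_3\}$ covering $E_{ext}$, this shows $S$ is a vertex cover of $H$ of size $|M_G|+3$, so $\mathrm{OPT}(H) \le \mathrm{OPT}(G)+3$.

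For the matching lower bound and the decomposition, I would use that $V(G)$ and $V_{ext}$ are disjoint, so any vertex cover $S$ of $H$ satisfies $|S| \ge |S \cap V(G)| + |S \cap V_{ext}|$; the first term is at least $\mathrm{OPT}(G)$ since $S \cap V(G)$ must cover $E(G)$, and the second is at least three by the pendant matching, giving $\mathrm{OPT}(H) \ge \mathrm{OPT}(G)+3$. Equality then forces every minimum cover of $H$ to restrict to a minimum cover of $G$ on $V(G)$ and to a size-three cover on $V_{ext}$; a per-triangle-vertex budget argument (covering both pendant edges at $c_i$ costs one vertex if $c_i$ is chosen, but two otherwise) forces $S \cap V_{ext} = \{c_1,c_2,c_3\}$. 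This yields precisely the disjoint decomposition demanded by self-containment.

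The main obstacle, and the step I would argue most carefully, is ruling out any interaction across $E_{con}$: a priori a minimum cover of $H$ might drop a triangle vertex $c_i$ and instead rely on the literal vertex $v_\ell$ to cover the connecting edge, which could alter which covers of $G$ are optimal and thereby break the one-to-one correspondence of \Cref{generalReductionCovering:lem:one-to-oneCorrespondence}. The disjointness of $V(G)$ and $V_{ext}$ together with the pendant matching neutralizes this concern: the three pendant edges must be paid for inside the gadget no matter what happens in $G$, and once the three triangle vertices are forced they already cover all of $E_{con}$, leaving the $G$-part completely unconstrained. Hence the gadget neither constrains nor is constrained by the base solution, which is exactly what \emph{self-contained} requires.
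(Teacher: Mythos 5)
Your proof is correct and takes essentially the same route as the paper's: both arguments rest on the degree-one pendant vertices forcing at least three gadget vertices into any cover (the paper via a contradiction on a cover of size at most $3$ missing a triangle vertex, you via the size-$3$ pendant-edge matching), on the triangle achieving this bound while covering all of $E_{con}$, and hence on optimal solutions merging disjointly. Your explicit accounting $\mathrm{OPT}(H) = \mathrm{OPT}(G) + 3$ with the budget argument pinning $S \cap V_{ext} = \{c_1,c_2,c_3\}$ is, if anything, a slightly more formal rendering of the paper's iterative statement that any optimal solution on $G^{\gamma-1}_\varphi$ extends disjointly by the triangle to an optimal solution on $G^{\gamma}_\varphi$.
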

	\begin{proof}
		Let $C^\prime_j \notin C$ be an arbitrary non-existing clause with $G_{fc}(C^\prime_j)$ being its fake clause gadget.
		Any optimal vertex cover on $G_{fc}(C^\prime_j)$ has size $3$ and contains exactly the triangle. 
		For a contradiction, assume there is a vertex cover $S^\prime$ of size at most $3$ on $G_{fc}(C^\prime_j)$, that does not contain at least one of those three vertices. 
		However, each of them has two neighbors that are not connected to any other vertices, which would then need to be part of the vertex cover instead, while not covering any additional edges. 
		Thus $S^\prime$ additionally needs to contain at least two vertices of the triangle, to cover its edges.
		This is a contradiction to $S^\prime$ containing at most three vertices. 
		
		By the same argument, none of the triangle vertices can be replaced their neighbors in the reduction graph. 
		Additionally, as the triangle is already part of the vertex cover, none of their neighbors in the reduction graph are forced to be in the vertex cover to cover the edge between them. 
		
		Let $G^0_\varphi = G_\varphi$ and let $G^\gamma_\varphi$ be the graph that has been extended with $\gamma$ many fake clause gadgets.
		Then given the graph $G^{\gamma-1}_\varphi$, the graph $G_{fc}(C^\prime_j)$ and the graph $G^\gamma_\varphi = G^{\gamma-1}_\varphi \circ G_{fc}(C^\prime_j)$, the following holds:
		There exists a solution of size $3$ for $G_{fc}(C^\prime_j)$, such that for any optimal solution of size $k^*$ on $G^{\gamma-1}_\varphi$ the disjoint union of those two solutions is an optimal solution of size $k^*+3$ for $G^\gamma_\varphi$.
		Therefore the fake clause gadget from \Cref{vc:def:fakeClause} is self-contained.
	\end{proof}

Since the vertices of a clause gadget are solution dependent on the literal vertices they represent, the dependency reveal gadget needs to account for that. 
An example of a dependency reveal gadget is depicted in \Cref{vc:fig:dependency}. 

\begin{definition}[Self-contained dependency reveal gadget for \vc]\label{vc:def:dependencyReveal}
	The dependency reveal gadget for a $\forall$-variable $x_i$ is a star.
	Its center vertex is connected to the literal vertices of all variables with index at least $i$ (except the true literal of variable $x_i$), and all vertices representing them in clauses (including the true literal of variable $x_i$).
	The number of leaves is such that, together with the connecting edges, the degree of the center vertex equals $3\binom{2n}{3} + 2n + 1$.
\end{definition}
\end{longversion}

\begin{figure}[!ht]
	\begin{subfigure}{0.48\textwidth}
	\centering
	\scalebox{1}{
	\begin{tikzpicture}[scale=0.5,
						node/.style = {shape=circle, draw, inner sep=0pt, minimum size=0.25cm},
						textnode/.style = {shape=circle, draw, inner sep=0pt, minimum size=0.4cm},
						smallnode/.style = {shape=circle, draw, inner sep=0pt, minimum size=0.1cm},
						box/.style = {rectangle, fill=gray!20, rounded corners, fill opacity=1, inner sep=1pt}]
		\node (bb1) at (-1.3, -3.1) {};
		\node (bb2) at (7.6, 0.6) {};
		\node[box, fill=gray!10, fit=(bb1)(bb2)] (bbox) {};
		\node[above left] (bbtext) at (bbox.south east) {$G'_\varphi$};
		
		\node (bt1) at (-1, -0.3) {};
		\node (bf1) at (2.3, 0.3) {};
		\node[box, fit=(bt1)(bf1)] (x1box) {};
		\node[right] (x1text) at (x1box.west) {$x_1$};
		\node[node, thick] (t1) at (0, 0) {};
		\node[node, thick] (f1) at (2, 0) {};
		\path[-, thick] (f1) edge (t1);
		
		\node (bt2) at (4, -0.3) {};
		\node (bf2) at (7.3, 0.3) {};
		\node[box, fit=(bt2)(bf2)] (x2box) {};
		\node[right] (x2text) at (x2box.west) {$x_2$};
		\node[node, thick] (t2) at (5, 0) {};
		\node[node, thick] (f2) at (7, 0) {};
		\path[-, thick] (f2) edge (t2);
		
		\node (bC11) at (2.3, -1.2) {};
		\node (bC12) at (-1, -2.8) {};
		\node[box, fit=(bC11)(bC12)] (C1box) {};
		\node[right] (C1text) at (C1box.west) {$C_1$};
		\node[node, thick] (C11) at (1, -1.5) {};
		\node[node, thick] (C12) at (0, -2.5) {};
		\node[node, thick] (C13) at (2, -2.5) {};
		
		\path[-, thick] (C11) edge (C12);
		\path[-, thick] (C12) edge (C13);
		\path[-, thick] (C13) edge (C11);
		
		\path[-, thick] (C11) edge (f1);
		\path[-, thick] (C12) edge (t1);
		\path[-, thick] (C13) edge (t2);
		\node (bd1) at (8.7, -3.1) {};
		\node (bd2) at (10.3, 0.6) {};
		\node[box, fill=gray!10, fit=(bd1)(bd2)] (bdox) {};
		\node[above right] (bdtext) at (bdox.south west) {$G_{dr}$};
		
		\node[node] (c) at (9, -1.2) {};
		
		\foreach \x in {1,...,10} {
			\node[smallnode] (c\x) at (10, 0.45-0.3*\x) {};
			\path[-] (c) edge node[right] {} (c\x);
		}
		
		\path[-, draw=blue, dashed] (c) edge (f1) edge (t2) edge (f2) edge (C11) edge (C13);
		\path[-, draw=blue, dashed, out=190, in=20, out distance=1cm, in distance=1cm] (c) edge (C12);
	\end{tikzpicture}
	}
\caption{The dependency reveal gadget for the $\forall$-variable $x_1$, with only one variable of higher index, is depicted. 
The fake clause gadgets of $G'_\varphi$ are omitted.}
\label{vc:fig:dependency}
	\end{subfigure}
	\hfill
	\begin{subfigure}{0.48\textwidth}
	\centering
	\scalebox{1}{
	\begin{tikzpicture}[scale=0.5,
		node/.style = {shape=circle, draw, inner sep=0pt, minimum size=0.25cm},
		textnode/.style = {shape=circle, draw, inner sep=0pt, minimum size=0.4cm},
		smallnode/.style = {shape=circle, draw, inner sep=0pt, minimum size=0.1cm},
		box/.style = {rectangle, fill=gray!20, rounded corners, fill opacity=1, inner sep=1pt}]
		\node (g2b1) at (13.8, -3.1) {};
		\node (g2b2) at (17.2, 0.6) {};
		\node[box, fill=gray!10, fit=(g2b1)(g2b2)] (g2box) {};
		\node[above right] (g2text) at (g2box.south west) {$G''_\varphi$};
		
		\node (bt) at (15.7, 0.1) {};
		\node (bf) at (16.9, -2.5) {};
		\node[box, fit=(bt)(bf)] (xbox) {};
		\node[left] (xtext) at (xbox.east) {$x_1$};
		\node[node, thick] (t) at (16, -0.2) {};
		\node[node, thick] (f) at (16, -2.2) {};
		
		\path[-, thick] (f) edge (t);
		
		\foreach \x in {1,...,3} {
			\node (ot\x) at (14.5, 0.6-0.4*\x) {};
			\path[-] (ot\x) edge (t);
		}
		\foreach \x in {1,...,4} {
			\node (of\x) at (14.5, -1.2-0.4*\x) {};
			\path[-] (of\x) edge (f);
		}
		
		\node (bt1) at (18.3, 0.6) {};
		\node (bt2) at (20.2, -1) {};
		\node[box, fill=gray!10, fit=(bt1)(bt2)] (tbox) {};
		\node[left] (ttext) at (tbox.east) {$G_{id}$};
		\foreach \x in {1,...,3} {
			\node[smallnode] (x1t\x) at (18.4, 1.0-0.6*\x) {};
			\node[smallnode] (y1t\x) at (18.9, 1.13-0.6*\x) {};
			\node[smallnode] (z1t\x) at (18.9, 0.87-0.6*\x) {};
			\path[-, draw=blue, dashed] (t) edge (x1t\x);
			\path[-] (x1t\x) edge (y1t\x);
			\path[-] (x1t\x) edge (z1t\x);
		}
		
		\node (bt1) at (18.3, -1.9) {};
		\node (bt2) at (20.2, -2.9) {};
		\node[box, fill=gray!10, fit=(bt1)(bt2)] (tbox) {};
		\node[left] (ttext) at (tbox.east) {$G_{id}$};
		\foreach \x in {1,2} {
			\node[smallnode] (x1f\x) at (18.4, -1.5-0.6*\x) {};
			\node[smallnode] (y1f\x) at (18.9, -1.63-0.6*\x) {};
			\node[smallnode] (z1f\x) at (18.9, -1.37-0.6*\x) {};
			\path[-, draw=blue, dashed] (f) edge (x1f\x);
			\path[-] (x1f\x) edge (y1f\x);
			\path[-] (x1f\x) edge (z1f\x);
		}
	\end{tikzpicture}
	}
\caption{The ID gadgets for the $\forall$-variable $x_1$ are shown. 
	Both literal vertices have the same degree. 
	For each gadget, the blue dashed edges are the set $E_{con}$.}
\label{vc:fig:Id}
	\end{subfigure}
	\caption{Dependency reveal gadget and ID gadget for \vc.}
\end{figure}
\begin{longversion}
The number of leaves of a dependency reveal gadget is always at least $2$.
Given a formula with $n$ variables, there are exactly $2n$ literals, and $\binom{2n}{3}$ possible clauses with three literals.
A dependency reveal gadget can target all three vertices of a clause and all literal vertices, however the true literal of variable $x_i$ is not targeted.
Thus, the number of connecting edges is at most $3\binom{2n}{3} + 2n - 1$.
Therefore, the optimal solution for the dependency reveal gadget always contains exactly the center vertex of the star, as we show in the following lemma. 

\begin{restatable}[Gadget Property 2]{lemma}{vclemdependencyRevealSelfContained}
	\label{vc:lem:dependencyRevealSelfContained}
	The dependency reveal gadget (\Cref{vc:def:dependencyReveal}) is self-contained for \vc{}.
\end{restatable}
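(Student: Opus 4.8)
The plan is to mirror the self-containment proof for the fake clause gadget in \Cref{vc:lem:fakeClauseSelfContained}: I will exhibit a fixed-size solution on the gadget in isolation and then show that its disjoint union with any optimal solution of the already-extended base graph is an optimal vertex cover of the combined graph. Since the dependency reveal gadget of \Cref{vc:def:dependencyReveal} is a star with center $c$, the natural candidate is the singleton $\{c\}$ of size $1$. The decisive structural fact is that $c$ is incident to every edge the gadget contributes: both the star edges joining $c$ to its leaves and every connecting edge in $E_{con}$, which by construction all have $c$ as one endpoint. Hence $\{c\}$ alone covers all of these edges simultaneously, and in particular the connecting edges do not force any base vertex into the cover.

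Writing $G^{\gamma-1}_\varphi$ for the base graph, $\DependencyRevealGadget(x_i)$ for the gadget, and $G^\gamma_\varphi = G^{\gamma-1}_\varphi \circ \DependencyRevealGadget(x_i)$, the upper bound is immediate. If $S^*$ is an optimal vertex cover of $G^{\gamma-1}_\varphi$ of size $k^*$, then $S^* \cup \{c\}$ covers every edge of $G^\gamma_\varphi$ (the base edges via $S^*$, and the star and connecting edges via $c$) and has size $k^* + 1$.

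For the matching lower bound I would take an arbitrary vertex cover $T$ of $G^\gamma_\varphi$ and set $T_{base} = T \cap V(G^{\gamma-1}_\varphi)$. Since the edges of $G^{\gamma-1}_\varphi$ lie entirely among base vertices, $T_{base}$ is itself a vertex cover of $G^{\gamma-1}_\varphi$, so $|T_{base}| \geq k^*$. Now I distinguish two cases. If $c \in T$, then $|T| \geq |T_{base}| + 1 \geq k^* + 1$. If $c \notin T$, then every star edge forces its leaf into $T$; as the gadget has at least two leaves, we obtain $|T| \geq k^* + 2$. In either case $|T| \geq k^* + 1$, so $S^* \cup \{c\}$ is optimal, and being a disjoint union of an optimal base solution with the size-$1$ gadget solution it witnesses self-containment exactly in the form used for the fake clause gadget.

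I expect the only delicate point to be the second case of the lower bound: one must guarantee that abstaining from $c$ is strictly worse, which rests entirely on the gadget having at least two leaves. This is precisely why the leaf count in \Cref{vc:def:dependencyReveal} is padded so that the center degree $3\binom{2n}{3} + 2n + 1$ strictly exceeds the at most $3\binom{2n}{3} + 2n - 1$ connecting edges; I would simply invoke that inequality rather than recompute it. Everything else is routine bookkeeping wholly analogous to the fake clause case.
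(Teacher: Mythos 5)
Your proposal is correct and takes essentially the same approach as the paper's proof: the singleton solution consisting of the star's center, its disjoint union with an optimal solution on the already-extended base graph, and the observation that the leaf count is padded so that the gadget has at least two leaves, making omission of the center strictly worse. Your explicit restriction-and-case-analysis lower bound merely formalizes what the paper asserts informally (that the optimal cover on the gadget is exactly the center and that no base vertex is forced into the cover), so there is nothing substantive to change.
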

	\begin{proof}
		Let $x_i$ be any $\forall$-variable of the \qsatgame-instance.
		Since $G_{dr}(x_i)$ is a star with at least $2$ leaves, the optimal vertex cover on $G_{dr}(x_i)$ contains exactly the center of the star, and no other vertices.
		As covering any edges incident to target vertices by those target vertices has no effect on the edges in $G_{dr}(x_i)$, no solution on the reduction graph can make the optimal solution on $G_{dr}(x_i)$ smaller.
		This also means that the optimal vertex cover on $G_{dr}(x_i)$ already covers all edges connecting it to its target vertices, thus no vertices of the reduction graph are forced to be in the vertex cover by attaching $G_{dr}(x_i)$. 
		
		Remember that $G_\varphi^\prime$ is the graph that resulted from $G_\varphi$ by adding a fake clause gadget for every possible clause $C^\prime_j \notin C$ with three literals.
		Let $G^0_\varphi = G^\prime_\varphi$ and let $G^\gamma_\varphi$ be the graph that has been extended with $\gamma$ many dependency reveal gadgets.
		Then given the graph $G^{\gamma-1}_\varphi$, the graph $G_{dr}(x_i)$ and the graph $G^\gamma_\varphi = G^{\gamma-1}_\varphi \circ G_{dr}(x_i)$, the following holds:
		There exists a solution of size $1$ for $G_{dr}(x_i)$, such that for any optimal solution of size $k^*$ on $G^{\gamma-1}_\varphi$ the disjoint union of those two solutions is an optimal solution of size $k^*+1$ for $G^\gamma_\varphi$.
		Therefore the dependency reveal gadget from \Cref{vc:def:dependencyReveal} is self-contained.
	\end{proof}
\end{longversion}

\begin{longversion}
	Since both the literal vertices and the vertices of clause gadgets are solution dependent, the online algorithm needs to be able to identify which variable they correspond to, and in the case of $\exists$-variables also which literal they correspond to.
	For that, we look at the degrees of all vertices in the graph $G_\varphi^{\prime\prime}$.
	The leaves in the stars of the dependency reveal gadgets and fake clause gadgets have degree $1$, and the variable vertices in fake clause gadgets have degree $5$. 
	The center vertices of the dependency reveal gadgets have degree $3\binom{2n}{3} + 2n$.
	Therefore, any vertex that was not present in $G_\varphi$ already has a degree that is either smaller than $\binom{2n-1}{2} + 4$ or larger than $\binom{2n-1}{2} + 4n+3$ for $n \geq 2$ (which is necessary for three different literals per clause).
	Note that the latter inequality holds due to the following:
	\begin{align*}
	\binom{2n}{3} &> \binom{2n-1}{2} && n \geq 2 \\
	\binom{2n}{3} &\geq 2n && n \geq 2 \\
	3\binom{2n}{3} + 2n &\geq \binom{2n}{3} + 6n > \binom{2n-1}{2} + 4n + 3 && n \geq 2
	\end{align*}
	Therefore, we use that range of degrees for our literal vertices and clause vertices. \\
	Let $d^\forall_<(i)$ (resp. $d^\forall_\leq(i)$) be the number of $\forall$-variables with index smaller (resp. smaller or equal) than $i$.
	Since the formula of the \qsatgame{} always alternates between $\exists$- and $\forall$-variables, $d^\forall_<(i) = \left\lfloor\frac{i}{2}\right\rfloor$ (resp. $d^\forall_\leq(i) = \left\lceil\frac{i}{2}\right\rceil$) if variable $x_1$ is $\forall$-quantified and $d^\forall_<(i) = \left\lfloor\frac{i-1}{2}\right\rfloor$ (resp. $d^\forall_\leq(i) = \left\lceil\frac{i-1}{2}\right\rceil$) otherwise.
	There are $\binom{2n-1}{2}$ possible different clauses with three literals that contain one specific literal. 
	Thus in $G_\varphi^{\prime\prime}$, the literal vertices have degree $\binom{2n-1}{2} + d^\forall_<(i) + 1$, or $\binom{2n-1}{2} + d^\forall_<(i) + 2$ in case of the false literal of a $\forall$-variable that is not the last variable.
	With this, we can define the literal ID gadgets.
	An example for an ID gadget can be seen in \Cref{vc:fig:Id}.
\end{longversion}

\begin{longversion}
	Finally, we define ID gadgets of literal vertices and all vertices in clause gadgets as they are solution dependent.
	Let $d^\forall_<(i)$ (resp. $d^\forall_\leq(i)$) be the number of $\forall$-variables with index smaller (resp. smaller or equal) than $i$.
	Since the formula of the \qsatgame{} always alternates between $\exists$- and $\forall$-quantified variables, $d^\forall_<(i) = \left\lfloor\frac{i}{2}\right\rfloor$ (resp. $d^\forall_\leq(i) = \left\lceil\frac{i}{2}\right\rceil$) if variable $x_1$ is $\forall$-quantified and $d^\forall_<(i) = \left\lfloor\frac{i-1}{2}\right\rfloor$ (resp. $d^\forall_\leq(i) = \left\lceil\frac{i-1}{2}\right\rceil$) otherwise.

\begin{definition}[Self-contained literal ID gadget for \vc]
	\label{vc:def:literalid}
	Let $\ell$ be some literal and $x_i$ its corresponding variable. 
	Let $d^\forall_<(i)$ be defined as above. 
	Further let
	\begin{align*}
		d(\ell) &= 4i - d^\forall_<(i) - 1 && \text{if $\ell$ is positive} \\
		d(\ell) &= 4i - d^\forall_<(i) - 2 && \text{if $\ell$ is negative and $x_i$ is $\forall$-quantified} \\
		d(\ell) &= 4i - d^\forall_<(i) && \text{if $\ell$ is negative and $x_i$ is $\exists$-quantified} 
	\end{align*}
	Then the literal ID gadget for the literal vertex representing $\ell$ consists of $d(\ell)$ paths of length $3$, where each middle vertex is connected to the identified literal vertex.
\end{definition}

It is necessary that the online algorithm can recognize which literal a clause vertex represents, as the adversary could choose to reveal a clause vertex before a literal vertex of the corresponding variable gadget.

\begin{definition}[Self-contained clause ID gadget for \vc]
	\label{vc:def:clauseid}
	Let $\ell$ be a literal with $\ell \in C_j$ for some clause $C_j \in C$ and let $x_i$ be the variable $\ell$ belongs to.
	Let $d^\forall_\leq(i)$ be defined as above.
	Further let $d(\ell) = \binom{2n-1}{2} + 4i - d^\forall_\leq(i) - 1$ if $\ell$ is positive and $d(\ell) = \binom{2n-1}{2} + 4i - d^\forall_\leq(i)$ if $\ell$ is negative.
	Then the clause ID gadget for the clause vertex representing $\ell$ consists of $d(\ell)$ paths of length $3$, where each middle vertex is connected to the identified clause vertex.
\end{definition}

\begin{table}[!ht]
	\begin{tabular}{lll}
		Vertex degree & Type of vertices & Strategy \\
		\hline
		$1$ & leaves of any extension gadget & reject \\
		$3$ & center vertices of ID gadgets & accept \\
		$5$ & triangle vertices of fake clause gadgets & accept \\
		$3\binom{2n}{3} + 2n + 1$ & center vertices of dependency reveal gadgets & accept \\
		$\binom{2n-1}{2} + 4i$ & true literal of $x_i$, also false literal of $x_i$ if it is $\forall$-quantified & depends on $\varphi$ \\
		$\binom{2n-1}{2} + 4i + 1$ & false literal of variable $x_i$ if it is $\exists$-quantified & depends on $\varphi$ \\
		$\binom{2n-1}{2} + 4i + 2$ & vertex representing true literal of variable $x_i$ in a clause & depends on $\varphi$ \\
		$\binom{2n-1}{2} + 4i + 3$ & vertex representing false literal of variable $x_i$ in a clause & depends on $\varphi$ 
	\end{tabular}
	\caption{List of all vertex degrees in the final reduction from \qsatgame{} to \ovn, where $n$ is the number of variables. For any vertex that is not part of a literal or clause gadget, the optimal solution can be deduced just from its degree.}
	\label{vc:tab:degrees}
\end{table}
\newpage
\begin{restatable}[Gadget Property 3]{lemma}{vclemIDGadgetSelfContained}
	\label{vc:lem:IDGadgetSelfContained}
	The ID gadgets (\Cref{vc:def:literalid,vc:def:clauseid}) are self-contained for \vc{}.
\end{restatable}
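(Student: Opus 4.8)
The plan is to reuse the template of the proofs of \Cref{vc:lem:fakeClauseSelfContained,vc:lem:dependencyRevealSelfContained}: first determine the minimum vertex cover of an ID gadget in isolation, then show that this cover neither forces any vertex of the reduction graph into the solution nor can be undercut by any solution on the reduction graph, and finally package these facts into the disjoint-union statement that defines self-containment. Since the literal ID gadget (\Cref{vc:def:literalid}) and the clause ID gadget (\Cref{vc:def:clauseid}) have the same combinatorial shape -- namely $d(\ell)$ pairwise vertex-disjoint copies, each a path $y - x - z$ whose middle vertex $x$ additionally carries the connecting edge $\{v,x\}$ to the identified vertex $v$ (see \Cref{vc:fig:Id}) -- I would prove both at once for a generic gadget \IdGadget{}$(v)$ built from $d := d(\ell)$ such copies.

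First I would analyze a single copy together with its connecting edge. The two internal edges $\{y,x\}$ and $\{x,z\}$ meet only in $x$, so a single vertex covering both must be $x$; any cover avoiding $x$ is forced to take both leaves $y$ and $z$ and must still rely on $v$ to cover $\{v,x\}$, which is strictly worse. Hence the middle vertex $x$ is the unique minimum cover of a copy, and it simultaneously covers the connecting edge $\{v,x\}$. Because the $d$ copies are pairwise vertex-disjoint, their internal edge sets are disjoint as well, so any vertex cover must contain at least one vertex per copy; consequently the set of all $d$ middle vertices is an optimal cover of \IdGadget{}$(v)$ of size exactly $d$.

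Next I would verify the two independence properties. The $d$ middle vertices already cover every connecting edge $\{v,x\}$, so attaching the gadget forces no vertex of $G_\varphi^{\prime\prime}$ -- in particular not $v$ -- into the cover. Conversely, no solution on the reduction graph can shrink the gadget's contribution below $d$: even if $v$ already lies in the cover and thereby handles all connecting edges, the $d$ vertex-disjoint pairs of internal edges still demand at least $d$ gadget vertices. Thus the optimal contribution of \IdGadget{}$(v)$ is exactly $d$ independently of the surrounding solution, and the middle-vertex cover is always an optimal choice.

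Assembling these observations yields self-containment in the required form. Writing $G^{\gamma-1}_\varphi$ for the graph before attaching the current ID gadget and $G^{\gamma}_\varphi = G^{\gamma-1}_\varphi \circ \IdGadget{}(v)$, the set of $d$ middle vertices is a size-$d$ solution on the gadget such that, for every optimal cover of size $k^*$ on $G^{\gamma-1}_\varphi$, the disjoint union is a cover of size $k^*+d$ on $G^{\gamma}_\varphi$; the matching lower bound $k^*+d$ from the two independence properties shows that this union is optimal, so both ID gadgets are self-contained. I expect the only delicate point to be the lower-bound bookkeeping -- confirming that the gadget genuinely costs $d$ regardless of whether $v$ belongs to the surrounding cover -- whereas the uniqueness of the middle-vertex cover and the absence of forced vertices follow immediately from the pendant structure of each copy.
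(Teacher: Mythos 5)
Your proposal is correct and follows essentially the same route as the paper's proof: identify the middle vertices as the unique optimal cover of each three-vertex path (size equal to the number of paths, i.e. $|E_{con}|$), observe that this cover already handles all connecting edges so no vertex of the reduction graph is forced, note that the count cannot drop even when $v$ is in the surrounding cover, and conclude with the same inductive disjoint-union statement over $G^{\gamma-1}_\varphi \circ \IdGadget(v)$. Your explicit lower-bound bookkeeping via the vertex-disjoint internal edge pairs is a slightly more detailed spelling-out of what the paper compresses into ``the above argument does not change if the vertex it connects to is in the vertex cover,'' but it is the same argument.
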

	\begin{proof}
		Both ID gadgets are a collection of paths with three vertices, where the middle vertex of each path is connected to the identified vertex.
		Thus the size of an optimal vertex cover $S$ on such an ID gadget is equal to the number of paths.
		Assume there is an optimal vertex cover $S^\prime$ on such an ID gadget that does not contain at least one of these middle vertices.
		However, then its two outgoing edges need to be covered, meaning $S^\prime$ is larger than $S$, since the neighbors of the middle vertices can not cover any other edges.
		This is a contradiction to $S^\prime$ being optimal.
		The above argument does not change if the vertex it connects to is in the vertex cover.
		Since the optimal vertex cover on $G_{id}(v)$ already covers all edges connecting it to $v$, $v$ is not forced to be part of the vertex cover by attaching $G_{id}(v)$. 
		
		Let $G^0_\varphi = G^{\prime\prime}_\varphi$ and let $G^\gamma_\varphi$ be the graph that has been extended with $\gamma$ many ID gadgets.
		Then given the graph $G^{\gamma-1}_\varphi$, the graph $G_{id}(v)$ for some literal or clause vertex $v$, and the graph $G^\gamma_\varphi = G^{\gamma-1}_\varphi \circ G_{id}(v)$, the following holds:
		There exists a solution of size $|E_{con}|$ for $G_{id}(v)$, such that for any optimal solution of size $k^*$ for $G^{\gamma-1}_\varphi$ the disjoint union of these two solutions is an optimal solution of size $k^*+|E_{con}|$ for $G^\gamma_\varphi$.
		Therefore, the ID gadgets from \Cref{vc:def:literalid,vc:def:clauseid} are self-contained.
	\end{proof}
\end{longversion}

\begin{longversion}
In our framework, we also add ID gadgets to the fake clause gadgets and dependency reveal gadgets, however the next lemma and \Cref{vc:tab:degrees} show that those vertices can already be recognized by the online algorithm.
Therefore, their ID gadgets are simply the empty graph, which trivially fulfills the property of self-containment.
In \Cref{vc:tab:degrees}, the vertex degrees after adding all ID gadgets are shown.

\begin{restatable}[Gadget Properties 4-6]{lemma}{vclemIDGadgetIdentification}
	\label{vc:lem:IDGadgetIdentification}
	In $G_\varphi^{\prime\prime\prime}$, each solution dependent vertex which is not in a literal gadget of a $\forall$-variable has a unique revelation subgraph. Further, the two literal vertices of a $\forall$-variable have the same revelation subgraph, but different from vertices of any other gadget. Finally, each vertex that is solution independent or part of an extension gadget has a revelation subgraph that allows for an optimal decision.
\end{restatable}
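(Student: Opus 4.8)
The plan is to reduce the whole statement to a comparison of vertex degrees. In the neighborhood reveal model (\Cref{intro:def:neighborhoodRevealModel}) the revelation subgraph of a vertex $v$ is the star on $N[v]$ whose only edges are those incident to $v$, so its isomorphism type is determined precisely by $\deg(v)$; two vertices share a revelation subgraph exactly when they share a degree. Hence all three properties become statements about the degree sequence listed in \Cref{vc:tab:degrees}. First I would re-derive these degrees by summing, for every vertex type, its degree in the base reduction, the number of dependency reveal gadgets attached to it, and the number of ID-gadget paths attached to it. The point to stress is that the path counts $d(\ell)$ of \Cref{vc:def:literalid,vc:def:clauseid} were chosen to cancel exactly the index-dependent terms $d^\forall_<(i)$ and $d^\forall_\leq(i)$ contributed by the dependency reveal gadgets, so that every literal vertex and every clause vertex lands on a degree of the form $\binom{2n-1}{2}+4i+r$ with a fixed offset $r\in\{0,1,2,3\}$, while the extension-gadget vertices carry the degrees $1,3,5$ and $3\binom{2n}{3}+2n+1$.

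For Property~4 I would read off the offsets: a true $\exists$-literal of $x_i$ has degree $\binom{2n-1}{2}+4i$, its false literal $\binom{2n-1}{2}+4i+1$, a clause vertex for a true literal $\binom{2n-1}{2}+4i+2$, and for a false literal $\binom{2n-1}{2}+4i+3$. Since the offsets $0,1,2,3$ are smaller than the step $4$ between consecutive indices, these values are pairwise distinct over all $(i,r)$, and the window $[\binom{2n-1}{2}+4,\binom{2n-1}{2}+4n+3]$ they occupy is disjoint from $\{1,3,5,3\binom{2n}{3}+2n+1\}$ for $n\geq2$: the lower end exceeds $5$, and the upper end lies below $3\binom{2n}{3}+2n+1$ by the separation inequality $3\binom{2n}{3}+2n>\binom{2n-1}{2}+4n+3$ already established above. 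Thus the degree pins down the role (index, polarity, literal versus clause) of such a vertex; the only coincidence at degree $\binom{2n-1}{2}+4i$ is between the two literals of a $\forall$-variable, which is exactly the excluded case. I would also remark that clause vertices representing the same literal share a degree, so ``unique'' is to be read as identifying the role, which suffices: the three vertices of any single triangle represent distinct literals and hence have distinct degrees, so the algorithm can still single one of them out.

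Property~5 is the delicate bookkeeping case. Here I would verify that the false literal of a $\forall$-variable $x_i$ gains one extra edge from the dependency reveal gadget of $x_i$ itself (which attaches to the false but not the true literal), while its ID gadget is given one fewer path ($d=4i-d^\forall_<(i)-2$ against $4i-d^\forall_<(i)-1$), so the two contributions cancel and both literals attain the common degree $\binom{2n-1}{2}+4i$. By the same offset and range arguments as above, this degree differs from every $\binom{2n-1}{2}+4i'+r$ with $(i',r)\neq(i,0)$ and from all four extension-gadget degrees, so the shared revelation subgraph of the two $\forall$-literals is distinct from that of any other gadget.

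Finally, for Property~6 I would observe that the solution-independent and extension-gadget vertices carry exactly the degrees $1,3,5$ and $3\binom{2n}{3}+2n+1$, which are pairwise distinct and, for $n\geq2$, disjoint from the literal/clause window. The optimal decision is then a function of the degree alone: degree-$1$ leaves are rejected, and the ID-gadget centers (degree $3$), fake-clause triangle vertices (degree $5$) and dependency-reveal centers (degree $3\binom{2n}{3}+2n+1$) are accepted, which is precisely the locally optimal choice guaranteed by the self-containment results \Cref{vc:lem:fakeClauseSelfContained,vc:lem:dependencyRevealSelfContained,vc:lem:IDGadgetSelfContained}. The main obstacle throughout is not conceptual but the degree accounting of the first and third paragraphs: one must check that the variable-dependent numbers of dependency-reveal edges are compensated so precisely that the two $\forall$-literals collapse onto one degree while every other solution-dependent role keeps its own degree, and it is the separation inequality that keeps the entire literal/clause window clear of the extension-gadget degrees.
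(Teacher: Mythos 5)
Your proposal is correct and follows essentially the same route as the paper's proof: both reduce revelation-subgraph isomorphism to vertex degrees and verify the degree accounting of \Cref{vc:tab:degrees}, including the cancellation by which the dependency-reveal edge to the false literal of a $\forall$-variable is offset by one fewer ID-gadget path so that both of its literals collide at degree $\binom{2n-1}{2}+4i$, and the separation inequality keeping the extension-gadget degrees $1$, $3$, $5$ and $3\binom{2n}{3}+2n+1$ clear of the literal/clause window. You are in fact slightly more careful than the paper on one point: you note that clause vertices representing the same literal in different clauses share a degree, so ``unique revelation subgraph'' must be read as identifying a vertex's role (index, polarity, literal versus clause) rather than the individual vertex --- a subtlety the paper's proof glosses over by asserting unique degrees outright.
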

	\begin{proof}
		All solution dependent vertices that are not part of the literal gadget of a $\forall$-variable have a unique revelation subgraph, since they have unique degree, as seen in \Cref{vc:tab:degrees}. 
		
		A literal vertex of a $\forall$-variable shares its degree with exactly one other vertex, that is the vertex representing the negated literal.
		Thus as long as the adversary does not reveal any (fake) clause vertices, dependency reveal vertices or literal vertices of variables with higher index, the two literal vertices of a $\forall$-variable have exactly the same revelation subgraph, but different from any other vertex.
		
		All solution-independent vertices and vertices of extension gadgets that have a degree larger than $1$ are always contained in an optimal solution, by construction of these gadgets. 
		Since their degrees are also always different from solution dependent vertices, the online algorithm can always make an optimal decision for them.
		Finally, by construction of the extension gadgets, no vertex of degree $1$ is contained in any optimal solution. 
		Therefore, the online algorithm can also always make an optimal decision on them based on their revelation subgraph.
	\end{proof}
\end{longversion}

\begin{longversion}
	\paragraph*{Polynomial Time Reduction}
	All our gadgets can be constructed in polynomial time, as they contain at most $\mathcal{O}(\binom{2n}{2}) = \mathcal{O}(n^2)$ many vertices.
	Furthermore, the number of gadgets is also polynomial in the number of variables, as the number of possible clauses with three literals is bounded by $\binom{2n}{3} \in \mathcal{O}(n^3)$.
	Finally, the solution size $k$ can also be computed in polynomial time.
	\begin{enumerate}
		\item For the base reduction, $n + 2m$ vertices are part of the optimal vertex cover.
		\item For the fake clause gadgets, $3 \cdot \left(\binom{2n}{3} - m\right)$ vertices are part of the optimal vertex cover.
		\item Recall that $d^\forall_<(i)$ is the number of $\forall$-variables with an index lower than $i$ and therefore computable in polynomial time. 
		Then the number of vertices in the optimal vertex cover that are part of dependency reveal gadgets is $d^\forall_<(n)$.
		\item For the ID gadgets, we distinguish between literal and clause ID gadgets. For the literal ID gadgets,
		$$
		x^\prime + \sum_{i=1}^{n}2 \cdot \left(4i + d^\forall_<(i) - 1\right)
		$$
		vertices are part of the optimal vertex cover, where $x^\prime = 1$ iff both $x_1$ and $x_n$ are $\exists$-quantified, $x^\prime = -1$ iff both $x_1$ and $x_n$ are $\forall$-quantified, and $x^\prime = 0$ otherwise.
		
		For a variable $x_i$ we denote with $\ell_{2i-1}$ its positive literal and with $\ell_{2i}$ its negative literal.
		For the clause ID gadgets, let $\#(\ell_a)$ be the number of times the literal $\ell_a, ~ a \in \{1,\dots,2n\}$ appears in a clause.
		Then $\sum_{i=1}^{n}(\#(\ell_{2i-1}) + \#(\ell_{2i})) = 3m$ and therefore the size of all $\#(\ell_a)$ is polynomially bounded. 
		The number of vertices in the optimal vertex cover, that are part of a clause ID gadget, is given by
		$$
		\sum_{i=1}^{n} \#(\ell_{2i-1}) \cdot \left(\binom{2n-1}{2} + 4i - d^\forall_\leq(i) - 1\right) + \#(\ell_{2i}) \cdot \left(\binom{2n-1}{2} + 4i - d^\forall_\leq(i)\right)
		$$
	\end{enumerate}
\end{longversion}

Since all constructions are polynomial time computable, we established the requirements for \Cref{generalReductionCovering:thm:PSPACECompleteness}, thus \Cref{vc:th:pspaceComplete} is proven.
The full construction of $G_\varphi^{\prime\prime\prime}$ is shown in \Cref{vc:fig:complete}.

\input{tikz/VC_complete}

\section{More Vertex Subset Problems}
In this section, we apply \Cref{generalReductionCovering:thm:PSPACECompleteness} to the \oig{} and \odg{}.
Like the \ovg{}, they take a graph $G$ and a number $k \in \mathbb{N}$ as input. 
They ask whether there is a winning strategy for the online algorithm, that is, it finds an independent set (resp. dominating set) of size at least (resp. most) $k$ for every reveal order while knowing an isomorphic copy of $G$.

\begin{restatable}{theorem}{isthpspaceComplete}\label{is:th:pspaceComplete}
	The \oig{} with the neighborhood reveal model and a map is \PSPACE-complete.
\end{restatable}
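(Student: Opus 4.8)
The plan is to invoke the general framework, \Cref{generalReductionCovering:thm:PSPACECompleteness}, so that the entire task reduces to exhibiting a gadget reduction from \sat{} to \is{} together with self-contained fake clause, dependency reveal, and ID gadgets satisfying Gadget Properties 1--6. Containment of \oig{} in \PSPACE{} is immediate from \Cref{generalReductionCovering:thm:PspaceContainment}, since \is{} is in \NP. The guiding observation for hardness is that a vertex set $S$ is independent exactly when $V \setminus S$ is a vertex cover; I would therefore reuse the very graph $G_\varphi^{\prime\prime\prime}$ built for \vc{} in \Cref{vc:sec:vertexCover}, taking as base reduction the same Garey--Johnson construction (an edge per variable, a triangle per clause, and one connecting edge per literal occurrence). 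By complementarity, the maximum independent set of $G_\varphi$ has size $n+m$ precisely when $\varphi$ is satisfiable, and satisfying assignments correspond one-to-one to maximum independent sets (choose one literal vertex per variable and, in each clause triangle, a clause vertex compatible with that choice). This supplies the base one-to-one correspondence required by \Cref{generalReductionCovering:lem:one-to-oneCorrespondence}.

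For self-containment I would exploit a single uniform fact inherited from \vc: in each of the three extension gadgets the unique optimal vertex cover already covers every connecting edge \emph{from the gadget side} -- this is exactly what made them self-contained for \vc. Taking the complement within the gadget therefore yields an independent set incident to none of the connecting edges, so attaching the gadget increases the maximum independent set by precisely the gadget's internal independence number while leaving every literal and clause vertex free to be chosen or not. Hence the same fake clause gadgets (triangles with pendant vertices), dependency reveal gadgets (stars), and ID gadgets (collections of short paths attached at interior vertices) are self-contained for \is, which establishes Gadget Properties 1--3; the respective fixed contributions are the pendant count, the leaf count, and the endpoint count.

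Because the reduction graph is \emph{literally} $G_\varphi^{\prime\prime\prime}$, all vertex degrees, and therefore all revelation subgraphs, coincide with the \vc{} case, so \Cref{vc:tab:degrees} transfers unchanged. Gadget Properties 4 and 5 (unique degrees for the solution-dependent vertices that are not $\forall$-literals, and a shared degree for the two literals of a $\forall$-variable) then hold verbatim. For Gadget Property 6 the optimal decision is simply the complement of the \vc{} strategy: \emph{accept} every degree-$1$ vertex and \emph{reject} the center vertices of ID and dependency reveal gadgets as well as the triangle vertices of fake clauses, each of which is again read off from the degree alone. With Gadget Properties 1--6 in place, \Cref{generalReductionCovering:lem:optimalGamestrategyAdversary,generalReductionCovering:lem:decisionOnVariableGadget} follow as in the framework, and \Cref{generalReductionCovering:thm:PSPACECompleteness} delivers \PSPACE-completeness of \oig.

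The main obstacle I anticipate is that self-containment and Gadget Property 6 must be re-proven for the \emph{maximization} objective rather than imported mechanically from \vc: one has to rule out that a global independent set ``cheats'' by taking a triangle, center, or interior vertex together with part of the base graph. Concretely, the pendant and endpoint multiplicities in the fake clause and ID gadgets must be large enough that sacrificing them to pick up a single high-degree gadget vertex is never profitable, and the degree ranges of \Cref{vc:tab:degrees} must still separate the vertex classes. Verifying these inequalities directly for independent sets is the one place where genuine work beyond complementation is needed.
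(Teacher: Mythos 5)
Your proposal is correct and follows the paper's proof almost verbatim: containment via \Cref{generalReductionCovering:thm:PspaceContainment}, hardness via the framework \Cref{generalReductionCovering:thm:PSPACECompleteness}, reuse of the \vc{} extension gadgets, and self-containment established through vertex-cover/independent-set complementarity (the optimal cover on each gadget consists exactly of the vertices incident to $E_{con}$, so the gadget's optimal independent set avoids all connecting edges). The one point where you genuinely diverge is the base reduction: you keep the \vc{} reduction graph $G_\varphi^{\prime\prime\prime}$ literally unchanged, which forces the flipped semantics that a literal vertex in the independent set encodes the value \emph{false} (a clause triangle vertex for an occurrence of $\ell$ is selectable only if $v_\ell$ is unselected, i.e., only if $\ell$ is true), whereas the paper modifies the base reduction by connecting each clause vertex to the \emph{negation} of the literal it represents, so that membership in the solution uniformly means ``true'' as in the \vc{} case. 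Both routes are valid: yours buys the ability to import \Cref{vc:tab:degrees} and all revelation-subgraph arguments with zero changes, at the cost of carrying the inverted truth-value interpretation through the correspondences of \Cref{generalReductionCovering:lem:one-to-oneCorrespondence} and \Cref{generalReductionCovering:lem:decisionOnVariableGadget} (a harmless relabeling, since the $\forall$-literal pair remains indistinguishable and the algorithm decides $\exists$-literals either way); the paper's rewiring keeps the encoding convention uniform across problems so the framework lemmas apply without any relabeling. Your closing paragraph correctly identifies the only place needing real work --- re-verifying self-containment and Gadget Property~6 for the maximization objective --- and those checks go through exactly as you anticipate (in a fake clause gadget the six pendants beat any triangle vertex, in the stars the leaves beat the center, and in the ID paths the endpoints beat the middle vertex), which is precisely the content of the paper's complementation argument.
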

\begin{longversion}
	\begin{proof}
		We again extend an existing reduction and apply \Cref{generalReductionCovering:thm:PSPACECompleteness}.
		
		For the base reduction from \sat{} to \is, we use a slight modification of the reduction from \sat{} to \vc{} given in \cite{garey1979computers}.
		Instead of connecting each clause vertex to the literal vertex it represents, we connect it to its negation.
		The size of the independent set that should be found in $G_\varphi$ is then $k = |X| + |C|$.
		The correctness argument works analogously to the reduction for \vc.
		
		For the fake clause gadget, we use the same construction as for \vc, but the target vertices are adjusted in the same way as for the clause gadgets.
		In case of the dependency reveal gadget and ID gadgets we use exactly the same constructions as for \vc.
		The full construction of the reduction is shown in \Cref{is:fig:complete}.
		Recall that for any vertex cover $S \subseteq V$, the set $V \setminus S$ forms an independent set.
		Since the optimal solution of any of our extension gadgets for \vc{} are the vertices incident to $E_{con}$ of that gadget, the optimal solution of these gadgets for \is{} contains all vertices not incident to $E_{con}$.
		Thus, their optimal solution can not be influenced by the solution on $G_\varphi$ and vice versa.
		Therefore, they are self-contained for \is.
		Gadget Properties $4$-$6$ hold for \is{} by the same arguments as for \vc.
		
		Since this modified reduction is obviously still computable in polynomial time, \Cref{generalReductionCovering:thm:PspaceContainment,generalReductionCovering:thm:PSPACECompleteness} prove our claim.
	\end{proof}
	
	\input{tikz/IS_complete}
\end{longversion}

\begin{restatable}{theorem}{dsthpspaceComplete}\label{ds:th:pspaceComplete}
	The \odg{} with the neighborhood reveal model and a map is \PSPACE-complete.
\end{restatable}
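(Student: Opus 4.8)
The plan is to follow the same recipe that \Cref{is:th:pspaceComplete} used for \oig. Containment in \PSPACE{} is immediate from \Cref{generalReductionCovering:thm:PspaceContainment}, since \ds{} is in \NP. For hardness I would instantiate the framework of \Cref{generalReductionCovering:thm:PSPACECompleteness}, which only requires me to exhibit a gadget reduction from \sat{} to \ds{} together with self-contained fake clause, dependency reveal, and ID gadgets satisfying Gadget Properties $1$-$6$.

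First I would fix the base reduction. I would use the textbook reduction from \sat{} to \ds: for each variable $x_i$ a triangle on the two literal vertices $v_{\ell_i}, v_{\overline{\ell}_i}$ plus a private guard vertex adjacent to both, which forces at least one literal vertex of each variable into every dominating set; for each clause $C_j$ a single clause vertex joined to the literal vertices satisfying it; and $k = n$. A dominating set of size $n$ must pick exactly one literal per triangle (to dominate its guard), leaving no budget for clause vertices, so every clause vertex is dominated if and only if the chosen literals satisfy $\varphi$, yielding the one-to-one correspondence required by \Cref{generalReductionCovering:lem:one-to-oneCorrespondence}. I would then identify the solution-dependent vertices precisely: the literal vertices are solution dependent, while the guard and clause vertices are dominated regardless of the assignment and are therefore solution independent (their $V_\ell \triangle V_{\overline{\ell}}$ is empty), so they are handled by Gadget Property $6$ simply by being rejected.

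Next I would design the three extension gadget types, guided by the left gadget of \Cref{framework:fig:extensionGadget}, which is itself a self-contained \ds{} example. Each gadget must contain its own forced dominator so that it is dominated internally, contributes a fixed additive amount to the optimum, and uses its connecting edges only to redundantly dominate endpoints in $G$. A \emph{fake clause gadget} for a non-existing clause is a clause vertex wired to its literals exactly as a real clause but equipped with a private guarded dominator, so that it is dominated independently of the assignment and the two literal vertices of a $\forall$-variable receive identical clause-neighborhoods. A \emph{dependency reveal gadget} for a $\forall$-variable $x_i$ is a self-contained forced-dominator structure whose connecting edges touch the solution-dependent (literal) vertices of all variables of index $\geq i$, so that revealing any such vertex exposes which pair of vertices are the literals of $x_i$. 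The \IdGadget{} attached to each solution-dependent literal vertex is a self-contained bundle of pendant guarded paths whose size tunes the degree of the identified vertex, collapsing the two $\forall$-literals to a common degree while separating every other relevant degree class.

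The main obstacle will be proving self-containment (Gadget Properties $1$-$3$) for \ds, which is genuinely harder than for \vc{} or \is{} because domination lets a dominator on one side of a connecting edge cover vertices on the other side; this is exactly the failure illustrated on the right of \Cref{framework:fig:extensionGadget}. For every extension gadget $G_{ext}$ attached by $E_{con}$ I would verify two directions: that the gadget's forced internal dominator already dominates all of $G_{ext}$ and all of its endpoints in $G$, so attaching $G_{ext}$ never forces extra vertices into the solution and keeps every valid solution on $G$ valid; and, conversely, that no global dominating set can select a gadget vertex to cover several endpoints of $E_{con}$ and thereby save budget, collapsing the additivity. Securing the second direction amounts to guarding every gadget vertex incident to $E_{con}$ by a private neighbor that only its intended internal dominator can cover, forcing the disjoint union of an optimal solution on $G$ and an optimal solution on $G_{ext}$ to be optimal on the extended graph. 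Once self-containment is established, Gadget Properties $4$-$6$ follow from a degree bookkeeping analogous to \Cref{vc:tab:degrees}, all gadgets are polynomial size and polynomial-time constructible, and \Cref{generalReductionCovering:thm:PSPACECompleteness} then yields \PSPACE-completeness of the \odg.
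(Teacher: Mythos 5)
Your overall route coincides with the paper's: the same folklore triangle-plus-guard base reduction with single clause vertices and $k = n$, the same observation that only the literal vertices are solution dependent (guards and clause vertices are dominated under every assignment), three self-contained extension gadgets built around a forced internal dominator, degree bookkeeping in the style of \Cref{vc:tab:degrees}, and the appeal to \Cref{generalReductionCovering:thm:PspaceContainment,generalReductionCovering:thm:PSPACECompleteness}. Your two-directional self-containment check (the forced center dominates the whole gadget and its $E_{con}$ endpoints redundantly; private guards in the triangles prevent any gadget vertex from saving budget on $G_\varphi$) is exactly the argument the paper gives for its star gadgets (\Cref{ds:def:fakeclause,ds:def:dependencyReveal,ds:def:literalid}); that your fake clause gadget keeps a separate clause-like vertex plus an attached dominator instead of merging both into one star center is a cosmetic difference.

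There is, however, one genuine gap: your dependency reveal gadget is wired symmetrically to \emph{all} literal vertices of variables with index $\geq i$, and you state its purpose as exposing ``which pair of vertices are the literals of $x_i$.'' That is the wrong target. The pair is already determined by the variable-gadget structure; \Cref{generalReductionCovering:def:dependencyRevealGadget} demands that the algorithm uniquely identify $v_\ell$ and $v_{\overline{\ell}}$ \emph{individually}, i.e., learn the adversary's truth value after the blind $\forall$-decision, since otherwise it cannot choose later $\exists$-literals so that all real clause vertices end up dominated. This is why \Cref{ds:def:dependencyReveal} connects the center to the relevant literals \emph{except the true literal of $x_i$}: that asymmetry is the only symmetry-breaker between the two equal-degree, isomorphically revealed literal vertices of a $\forall$-variable, and it becomes usable precisely when a later solution-dependent vertex is revealed, its edge to the \DependencyRevealGadget{}-center lets the algorithm identify that center on the map, and the unique literal of $x_i$ adjacent to it is thereby recognized as the false one. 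With your symmetric wiring the two literals of $x_i$ stay interchangeable for the entire game, the algorithm can never recover the $\forall$-assignment, \Cref{generalReductionCovering:lem:decisionOnVariableGadget} fails, and the correspondence to \qsatgame{} collapses. (A minor slip besides: a size-$n$ dominating set need not pick a literal in every triangle --- it may pick the guard, which dominates the triangle but no clause vertex; the correspondence survives because such a choice can be exchanged for either literal, but the argument should say so rather than claim the guard forces a literal.)
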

\begin{longversion}
	We again extend an existing reduction and apply \Cref{generalReductionCovering:thm:PSPACECompleteness}.
	For the base reduction, we use the following folklore reduction.
	
	Let $\varphi$ be the \sat-formula, let $X$ be set of $n$ variables and let $C$ be the set of $m$ clauses of $\varphi$. 
	We construct the following graph $G_\varphi = (V, E)$:
	For each variable $x_i, \indexSet{i}{n}$, we introduce a triangle as its variable.
	Two of the vertices represent the literals, which we call literal vertices.
	The third vertex ensures that always one vertex of the variable gadget has to be chosen. 
	For each clause $C_j \indexSet{j}{m}$, we construct a clause gadget consisting of the single vertex.
	Each clause vertex is connected to the literal vertices of the literals the clause contains.
	The size of the dominating set that should be found in $G_\varphi$ is set to $k = n$.
	
	The logical dependencies in $G_\varphi$ are of the type if the literal vertex representing the literal $\ell$ is not contained in a solution, then the literal vertex representing $\overline{\ell}$ must be contained.
	Furthermore, for each clause containing $\ell$, one of the remaining literal vertices must be contained.
	As any truth assignment assigns each literal either true or false, the third vertex of the variable gadget is solution independent.
	Therefore, only the literal vertices in $G_\varphi$ are solution dependent.
	
	We define the following extension gadgets for \ds.
	\begin{definition}[Self-contained fake clause gadget for \ds]
		\label{ds:def:fakeclause}
		The fake clause gadget for non-existing clause $C^\prime_j \notin C$, is a star with $2n-2$ leaves.
		Its center is connected to the literal vertices representing the literals contained in the clause.
	\end{definition}
	\begin{definition}[Self-contained dependency reveal gadget for \ds]
		\label{ds:def:dependencyReveal}
		The dependency reveal gadget for a $\forall$-variable $x_i$ is a star.
		The target vertices are the literal vertices of all variables with index at least $i$, except the true literal of variable $x_i$.
		The number of leaves is such that, together with the connecting edges, the degree of the center vertex equals $2n+1$.
	\end{definition}
	\begin{definition}[Self-contained literal ID gadget for \ds]
		\label{ds:def:literalid}
		Let $\ell$ be some literal, and $x_i$ the corresponding variable. 
		Let $d^\forall_<(i)$ be defined as above. Further let
		\begin{align*}
			d(\ell) &= 4i - d^\forall_<(i) - 2 && \text{if $\ell$ is positive} \\
			d(\ell) &= 4i - d^\forall_<(i) - 3 && \text{if $\ell$ is negative and $x_i$ is $\forall$-quantified} \\
			d(\ell) &= 4i - d^\forall_<(i) - 1 && \text{if $\ell$ is negative and $x_i$ is $\exists$-quantified} 
		\end{align*}
		Then the literal ID gadget for the literal vertex representing $\ell$ is a star with $\binom{2n-1}{2} + 4(n+1)$ leaves, where $d(\ell)$ of those leaves are also connected to the literal vertex.
	\end{definition}
	
	Our fake clause gadget and dependency reveal gadget for \ds{} have the same structure, as they are stars. 
	Any optimal dominating set on those gadgets contains exactly the center vertex of the star.
	The connection of those gadgets to the reduction graph is done only by edges from the centers of the respective stars to literal vertices.
	Thus, neither can make the solution on the reduction graph smaller, as the literal vertices of each variable are in a triangle together with a third vertex that is not connected to any other vertex.
	At the same time, dominating the center vertex of one of those gadgets by one of the connected literal vertices never removes it from an optimal dominating set, as it has at least two leaves.
	This proves the Gadget Properties $1$ and $2$.
	
	The literal ID gadget is also a star.
	Since there are always at least $2$ of its leaves that are not connected to any other vertex, an optimal solution on the ID gadget always contains the center vertex.
	Thus the vertices that are connected to the reduction graph are dominated, but not contained in any optimal dominating set.
	Since they are all connected to the same vertex, this proves Gadget Property $3$.
	
	After extending $G_\varphi$ with the fake clause gadgets, dependency reveal gadgets and literal ID gadgets (in that order), Gadget Properties $4$-$6$ already hold as shown in \Cref{ds:tab:degrees}.
	Thus we can use the empty graph as an ID gadget for the clauses, fake clause gadgets and dependency reveal gadgets to obtain a reduction according to our framework.
	Since the reduction can be computed in polynomial time by an analogous argument to the reduction for \ovg, \Cref{generalReductionCovering:thm:PSPACECompleteness} proves \PSPACE-hardness of \odg.
	Thus together with \Cref{generalReductionCovering:thm:PspaceContainment}, it is also \PSPACE-complete.
	
	\begin{table}
		\begin{tabular}{lll}
			Vertex degree & Type of vertices & Strategy \\
			\hline
			$1$ & leaves of any fake clause / dependency reveal gadget, & reject \\
			& leaves of literal ID gadgets not adjacent to a literal vertex & \\
			$2$ & leaves of literal ID gadgets adjacent to a literal vertex & reject \\
			& third vertex of variable gadgets &  \\
			$3$ & clause vertices & reject \\
			$2n+1$ & center vertex of fake clause / dependency reveal gadgets & accept \\
			$\binom{2n-1}{2} + 4(n+1)$ & center vertex of any literal ID gadget & accept \\
			$\binom{2n-1}{2} + 4i$ & true literal of $x_i$, also false literal of $x_i$ if it is $\forall$-quantified & depends on $\varphi$ \\
			$\binom{2n-1}{2} + 4i + 1$ & false literal of variable $x_i$ if it is $\exists$-quantified & depends on $\varphi$ 
		\end{tabular}
		\caption{List of all vertex degrees in the final reduction from \qsatgame{} to \odg. For any vertex that is not part of a literal gadget, the optimal solution can be deduced just from its degree.}
		\label{ds:tab:degrees}
	\end{table}
	
\begin{figure}[!ht]
	\centering
	\begin{tikzpicture}[scale=0.5,
						node/.style = {shape=circle, draw, inner sep=0pt, minimum size=0.15cm},
						largenode/.style = {shape=circle, draw, inner sep=0pt, minimum size=0.4cm},
						smallnode/.style = {shape=circle, draw, inner sep=0pt, minimum size=0.1cm},
						box/.style = {rectangle, fill=gray!20, rounded corners, fill opacity=1, inner sep=1pt}]
		\def\a{0}
		\def\b{0}
		\node (bx11) at (\a-0.5, \b-0.5) {};
		\node (bx12) at (\a+3.5, \b+2) {};
		\node[box, fit=(bx11)(bx12)] (boxx1) {};
		\node[below right] (x1text) at (boxx1.north west) {$\forall x_1$};
		\node[largenode, ultra thick, fill=\colorno] (x1h) at (\a+1.5, \b+1.5) {};
		\node[largenode, ultra thick, fill=\colormaybe] (x1t) at (\a, \b) {\tiny$t$};
		\node[largenode, ultra thick, fill=\colormaybe] (x1f) at (\a+3, \b) {\tiny$f$};
		\path[-, ultra thick] (x1h) edge (x1f) edge (x1t) (x1f) edge (x1t);
		
		\node (bx1f1) at (\a+4.5, \b-2) {};
		\node (bx1f2) at (\a+6, \b+3.2) {};
		\node[box, fit=(bx1f1)(bx1f2)] (boxx1f) {};
		\node[below right] (x1ftext) at (boxx1f.north west) {\IdGadget};
		\node[node, fill=\coloryes] (x1fc) at (\a+5, \b) {};
		\node[smallnode, fill=\colorno] (x1f1) at (\a+4.5, \b) {};
		\path[-] (x1f1) edge (x1fc);
		\path[-, draw=\colorconnect, dashed] (x1f1) edge (x1f);
		\foreach \x in {1,...,14} {
			\node[smallnode, fill=\colorno] (x1of\x) at (\a+6, \b+3.6-0.4*\x) {};
			\path[-] (x1of\x) edge (x1fc);
		}
		
		\node (bx1t1) at (\a-3, \b-2) {};
		\node (bx1t2) at (\a-1.5, \b+3.2) {};
		\node[box, fit=(bx1t1)(bx1t2)] (boxx1t) {};
		\node[below left] (x1ttext) at (boxx1t.north east) {\IdGadget};
		\node[node, fill=\coloryes] (x1tc) at (\a-2, \b) {};
		\foreach \x in {1,2} {
			\node[smallnode, fill=\colorno] (x1t\x) at (\a-1.5, \b+0.6-0.4*\x) {};
			\path[-] (x1t\x) edge (x1tc);
			\path[-, draw=\colorconnect, dashed] (x1t\x) edge (x1t);
		}
		\foreach \x in {1,...,13} {
			\node[smallnode, fill=\colorno] (x1ot\x) at (\a-3, \b+3.6-0.4*\x) {};
			\path[-] (x1ot\x) edge (x1tc);
		}
		
		\def\a{12}
		\def\b{0}
		\node (bx21) at (\a-0.5, \b-0.5) {};
		\node (bx22) at (\a+3.5, \b+2) {};
		\node[box, fit=(bx21)(bx22)] (boxx2) {};
		\node[below left] (x2text) at (boxx2.north east) {$\exists x_2$};
		\node[largenode, ultra thick, fill=\colorno] (x2h) at (\a+1.5, \b+1.5) {};
		\node[largenode, ultra thick, fill=\colormaybe] (x2t) at (\a, \b) {\tiny$t$};
		\node[largenode, ultra thick, fill=\colormaybe] (x2f) at (\a+3, \b) {\tiny$f$};
		\path[-, ultra thick] (x2h) edge (x2f) edge (x2t) (x2f) edge (x2t);
		
		\node (bx2f1) at (\a+4.5, \b-2) {};
		\node (bx2f2) at (\a+6, \b+2.2) {};
		\node[box, fit=(bx2f1)(bx2f2)] (boxx2f) {};
		\node[below right] (x2ftext) at (boxx2f.north west) {\IdGadget};
		\node[node, fill=\coloryes] (x2fc) at (\a+5, \b) {};
		\foreach \x in {1,...,6} {
			\node[smallnode, fill=\colorno] (x2f\x) at (\a+4.5, \b+1.2-0.4*\x) {};
			\path[-] (x2f\x) edge (x2fc);
			\path[-, draw=\colorconnect, dashed] (x2f\x) edge (x2f);
		}
		\foreach \x in {1,...,9} {
			\node[smallnode, fill=\colorno] (x2of\x) at (\a+6, \b+2-0.4*\x) {};
			\path[-] (x2of\x) edge (x2fc);
		}
		
		\node (bx2t1) at (\a-3, \b-2) {};
		\node (bx2t2) at (\a-1.5, \b+2.2) {};
		\node[box, fit=(bx2t1)(bx2t2)] (boxx2t) {};
		\node[below left] (x2ttext) at (boxx2t.north east) {\IdGadget};
		\node[node, fill=\coloryes] (x2tc) at (\a-2, \b) {};
		\foreach \x in {1,...,5} {
			\node[smallnode, fill=\colorno] (x2t\x) at (\a-1.5, \b+1.2-0.4*\x) {};
			\path[-] (x2t\x) edge (x2tc);
			\path[-, draw=\colorconnect, dashed] (x2t\x) edge (x2t);
		}
		\foreach \x in {1,...,10} {
			\node[smallnode, fill=\colorno] (x2ot\x) at (\a-3, \b+2-0.4*\x) {};
			\path[-] (x2ot\x) edge (x2tc);
		}
		
		\def\a{3}
		\def\b{4.5}
		\node (bd1) at (\a-1.8, \b-0.5) {};
		\node (bd2) at (\a+0.5, \b+1) {};
		\node[box, fit=(bd1)(bd2)] (boxd) {};
		\node[right] (dtext) at (boxd.west) {\DependencyRevealGadget};
		\node[largenode, fill=\coloryes] (drg1) at (\a, \b) {};
		\node[smallnode, fill=\colorno] (drg11) at (\a-0.3, \b+1) {};
		\node[smallnode, fill=\colorno] (drg12) at (\a+0.3, \b+1) {};
		\path[-] (drg1) edge (drg11) edge (drg12);
		
		\path[-, draw=\colorconnect, dashed] (x1f) edge (drg1);
		\path[-, draw=\colorconnect, dashed, in=0, out=110, in distance=5cm, out distance=4.5cm] (x2f) edge (drg1);
		\path[-, draw=\colorconnect, dashed, in=355, out=105, in distance=4cm, out distance=4cm] (x2t) edge (drg1);
		
		\def\a{1.5}
		\def\b{-5.5}
		\node (bc11) at (\a-1.5, \b-0.5) {};
		\node (bc12) at (\a+0.5, \b+0.5) {};
		\node[box, fit=(bc11)(bc12)] (boxc1) {};
		\node[right] (c1text) at (boxc1.west) {$C_1$};
		\node[largenode, ultra thick, fill=\colorno] (C1) at (\a, \b) {};
		\path[-, ultra thick] (x1t) edge (C1);
		\path[-, ultra thick] (x1f) edge (C1);
		\path[-, ultra thick] (x2f) edge (C1);
		
		\def\a{13.5}
		\def\b{-5.5}
		\node (bc31) at (\a-1.5, \b-0.5) {};
		\node (bc32) at (\a+0.5, \b+0.5) {};
		\node[box, fit=(bc31)(bc32)] (boxc3) {};
		\node[right] (c3text) at (boxc3.west) {$C_2$};
		\node[largenode, ultra thick, fill=\colorno] (C3) at (\a, \b) {};
		\path[-, ultra thick] (x1t) edge (C3);
		\path[-, ultra thick] (x2t) edge (C3);
		\path[-, ultra thick] (x2f) edge (C3);
		
		\def\a{5.5}
		\def\b{-5.5}
		\node (bc21) at (\a-1.7, \b-1) {};
		\node (bc22) at (\a+0.5, \b+0.5) {};
		\node[box, fit=(bc21)(bc22)] (boxc2) {};
		\node[right] (c2text) at (boxc2.west) {\FakeClauseGadget};
		\node[largenode, fill=\coloryes] (C2) at (\a, \b) {};
		\path[-, draw=\colorconnect, dashed] (x1f) edge (C2);
		\path[-, draw=\colorconnect, dashed] (x2t) edge (C2);
		\path[-, draw=\colorconnect, dashed] (x2f) edge (C2);
		
		\node[smallnode, fill=\colorno] (C21) at (\a-0.3, \b-1) {};
		\node[smallnode, fill=\colorno] (C22) at (\a+0.3, \b-1) {};
		\path[-] (C2) edge(C21) edge (C22);
		
		\def\a{9.5}
		\def\b{-5.5}
		\node (bc41) at (\a-1.7, \b-1) {};
		\node (bc42) at (\a+0.5, \b+0.5) {};
		\node[box, fit=(bc41)(bc42)] (boxc4) {};
		\node[right] (c4text) at (boxc4.west) {\FakeClauseGadget};
		\node[largenode, fill=\coloryes] (C4) at (\a, \b) {};
		\path[-, draw=\colorconnect, dashed] (x1t) edge (C4);
		\path[-, draw=\colorconnect, dashed] (x1f) edge (C4);
		\path[-, draw=\colorconnect, dashed] (x2t) edge (C4);
		
		\node[smallnode, fill=\colorno] (C41) at (\a-0.3, \b-1) {};
		\node[smallnode, fill=\colorno] (C42) at (\a+0.3, \b-1) {};
		\path[-] (C4) edge (C41) edge (C42);
	\end{tikzpicture}
	\caption{Complete view on the reduction for the \qsat{}-instance $\forall x_1 \exists x_2 \left(x_1 \vee \overline{x_1} \vee \overline{x_2}\right) \wedge \left(x_1 \vee x_2 \vee \overline{x_2}\right)$ to \odg{}. The thick vertices and edges represent the original reduction. The \colorconnect{} dashed edges are the connecting edges of the extension gadgets. There are optimal solutions that contain all the \coloryes{} vertices and none of the \colorno{} vertices. Whether the \colormaybe{} vertices are contained depends on the feasible solutions for the \qsat-formula.}
	\label{ds:fig:complete}
\end{figure}
\end{longversion}

\section{Conclusion}
We derived online games from the typical online setting in order to analyze their computational complexity.
Furthermore, we developed a framework for online versions of vertex subset problems with neighborhood reveal model that allows reductions from \qsatgame~ to show that these are \PSPACE-complete.
We showed particularly that the online versions \vc, \is~ and \ds~with neighborhood reveal model are \PSPACE-complete.

The gap between the complexity analysis that started with the \NP-hardness for the problems from AOC under the vertex arrival model and our \PSPACE-completeness results need to be closed. 
From our results the questions arises if the three problems \vc{}, \is{} and \ds{} are actually \PSPACE-complete under the vertex arrival model. 
One way to show the \PSPACE-completeness is to use our reduction framework and add a type of error correction gadget. 
However, the missing knowledge in the vertex arrival model might increase the asymmetry in favor of the adversary such that the complexity decreases and it remains \NP-hard.  

Additionally, the presented framework may be extended to more general subset problems from AOC where the solution is not a vertex subset.  



\newpage

\bibliography{bibliography}

\newpage

\begin{shortversion}
	\appendix
	\input{src/appendix.tex}
\end{shortversion}

\end{document}